\definecolor{darkgreen}{rgb}{0,.35,0}
\definecolor{darkblue}{rgb}{0,0,.5}
\definecolor{darkred}{rgb}{.6,0,0}
\title{Subquadratic-Time Algorithms for Normal Bases
}
\author{
Mark Giesbrecht \\
Cheriton School of Computer\\ Science
University of Waterloo\\ 
Waterloo, ON, Canada  N2L 3G1 \\
\email{mwg@uwaterloo.ca}\\
\homepage{https://cs.uwaterloo.ca/~mwg/}
\and 
Armin Jamshidpey\\
Cheriton School of Computer\\ Science
University of Waterloo\\ 
Waterloo, ON, Canada  N2L 3G1 \\
\hbox to 0pt{\email{armin.jamshidpey@uwaterloo.ca}}
\and 
\'Eric Schost \\
Cheriton School of Computer\\ Science
University of Waterloo\\ 
Waterloo, ON, Canada  N2L 3G1 \\
\email{eschost@uwaterloo.ca}\\
\homepage{https://cs.uwaterloo.ca/~eschost/}\\
}
\begin{abstract}
  For any finite Galois field extension $\K/\F$, with Galois group $G
  = \mathrm{Gal}(\K/\F)$, there exists an element $\alpha \in \K$
  whose orbit $G\cdot\alpha$ forms an $\F$-basis of $\K$. Such an
  $\alpha$ is called a \emph{normal element} and $G\cdot\alpha$ is a
  \emph{normal basis}. We introduce a probabilistic algorithm for
  testing whether a given $\alpha \in \K$ is normal, when $G$ is
  either a finite abelian or a metacyclic group.  The algorithm is
  based on the fact that deciding whether $\alpha$ is normal can be
  reduced to deciding whether $\sum_{g \in G} g(\alpha)g \in \K[G]$ is
  invertible; it requires a slightly subquadratic number of
  operations. Once we know that $\alpha$ is normal, we show how to
  perform conversions between the power basis of $\K/\F$ and the
  normal basis with the same asymptotic cost.
\end{abstract}
\newcommand{\F}{{\mathsf{F}}}
\newcommand{\K}{{\mathsf{K}}}
\def\A{\mathbb{A}}
\def\H{\mathbb{H}}
\def\B{\mathbb{B}}
\def\Z{\mathbb{Z}}
\def\C{\mathbb{C}}
\def\Q{\mathbb{Q}}
\def\D{\mathbb{D}}
\newcommand{\QQ}{\mathbb{Q}}
\newcommand{\mat}[1]{\mathbf{\MakeUppercase{#1}}} 
\newcommand{\osumcost}{O(n^{(3/4)\cdot \omega(4/3)})}
\newcommand{\osumcosttilde}{\tilde{O}(n^{(3/4)\cdot \omega(4/3)})}
\newcommand{\thecost}{\tilde{O}(n^{(3/4)\cdot \omega(4/3)})}
\newcommand{\FF}{{\mathbb{F}}}
\newcommand{\xbar}{\xi}
\newcommand{\citeN}{\citet}
\begin{document}


\noacknowledge
                 
\vbox to 0pt{\vspace*{4cm}\tiny\noindent\framebox{To appear: Computational Complexity, 2021}}%

\vspace*{-15pt}
\section{Introduction}

For a finite Galois field extension $\K/\F$, with Galois group $G =
\mathrm{Gal}(\K/\F)$, an element $\alpha \in \K$ is called
\emph{normal} if the set of its Galois conjugates $G \cdot \alpha = \{
g(\alpha): g\in G\}$ forms a basis for $\K$ as a vector space over
$\F$. The existence of a normal element for any finite Galois extension
is classical, and constructive proofs are provided in most algebra
texts (see, e.g., \citealt[Section 6.13]{Lang}).
\newpage
 
While there is a wide range of well-known applications of normal bases in
finite fields, such as fast exponentiation~(e.g., \citealt{GaGaPaSh00}), there also
exist applications of normal elements in characteristic zero.  For instance,
in multiplicative invariant theory, for a given permutation lattice and
related Galois extension, a normal basis is useful in computing the
multiplicative invariants explicitly~\citep*{Jam18}.

A number of algorithms are available for finding a normal element in
characteristic zero and in finite fields.  Because of their immediate
applications in finite fields, algorithms for determining normal
elements in this case are most commonly seen.  A fast randomized
algorithm for determining a normal element in a finite field
$\FF_{q^n}/\FF_q$, where $\FF_{q^n}$ is the finite field with $q^n$
elements for any prime power $q$ and integer $n>1$, is presented by
\citeN{GatGie90}, with a cost of $O(n^2+n\log q)$ operations in
$\FF_q$.  A faster randomized algorithm is introduced by
\citeN{KalSho98}, with a cost of $O(n^{1.82}\log q)$ operations in
$\FF_q$.  In the bit complexity model, Kedlaya and Umans showed how to
reduce the exponent of $n$ to $1.63$, by leveraging their quasi-linear
time algorithm for {\em modular
  composition}~\citep{KeUm11}. \cite{LenstraNormal} introduced a
deterministic algorithm to construct a normal element which uses
$n^{O(1)}$ operations in $\FF_{q^n}/\FF_q$.  To the best of our
knowledge, the algorithm of \cite{AugCam94} is the most efficient
deterministic method, with a cost of $O(n^3+n^2\log q)$ operations in
$\FF_q$.

In characteristic zero, \cite{SchSte93} gave an algorithm for finding
a normal basis of a number field over $\QQ$ with a cyclic Galois group
of cardinality $n$ which requires $n^{O(1)}$ operations in $\QQ$.
\cite{Pol94} gives an algorithm for the more general case of finding a
normal basis in an abelian extension $\K/\F$ which requires $n^{O(1)}$
operations in $\F$.  More generally in characteristic zero, for any
Galois extension $\K/\F$ of degree $n$ with Galois group given by a
collection of $n$ matrices, \cite{Girstmair} gives an algorithm which
requires $O(n^4)$ operations in $\F$ to construct a normal element in
$\K$.

In this paper we present a new randomized algorithm that decides
whether a given element in either an abelian or a metacyclic extension
is normal, with a runtime subquadratic in the degree $n$ of the
extension. The costs of all algorithms are measured by counting
\emph{arithmetic operations} in $\F$ at unit cost.  Questions related
to the bit-complexity of our algorithms are challenging, and beyond
the scope of this paper.

Our main conventions are the following.
\begin{assumption}
  \label{assum}
  Let $\K/\F$ be a finite Galois extension presented as
  $\K=\F[x]/\langle P(x)\rangle$, for an irreducible polynomial $P\in
  \F[x]$ of degree $n$, with $\F$ of characteristic zero. Then,
  \begin{itemize}
  \item elements of $\K$ are written on the power basis $1,\xbar,\dots,\xbar^{n-1}$,
    where $\xbar := x \bmod P$;
  \item elements of $G$ are represented by their action on $\xbar$.
  \end{itemize}
\end{assumption}

In particular, for $g \in G$ given by means of $\gamma:=g(\xbar) \in \K$,
and $\beta = \sum_{0\leq i<n}\beta_i\xbar^i\in\K$, the fact that $g$ is an
$\F$-automorphism implies that $g(\beta)$ is equal to $\beta(\gamma)$, the
polynomial composition of $\beta$ (as a polynomial in $\xbar$) at $\gamma$,
reduced modulo $P$.

Our algorithms combine techniques and ideas of~\cite{GatGie90} and
\cite{KalSho98}: $\alpha \in \K$ is normal if and only if the element
$S_\alpha := \sum_{g \in G} g(\alpha)g \in \K[G]$ is invertible in the
group algebra $\K[G]$.  However, writing down $S_\alpha$ involves
$\Theta(n^2)$ elements in $\F$, which precludes a subquadratic
runtime. Instead, knowing $\alpha$, the algorithms use a randomized
reduction to a similar question in $\F[G]$, that amounts to applying a
random projection $\ell:\K\to\F$ to all entries of $S_\alpha$, giving
us an element $s_{\alpha,\ell} \in \F[G]$. For that, we adapt
algorithms from~\citep{KalSho98} that were developed for Galois groups
of finite fields.

Having $s_{\alpha,\ell}$ in hand, we need to test its
invertibility. In order to do so, we present an algorithm in the
abelian case which relies on the fact that $\F[G]$ is isomorphic to a
multivariate polynomial ring modulo an ideal $(x^{e_i}_i-1)_{1 \leq i
  \leq m}$, where $e_i$'s are positive integers. For metacyclic
groups, we exploit the block-Hankel structure of the matrix of
multiplication by $s_{\alpha,\ell}$. 

These latter questions on the cost of arithmetic operations in $\F[G]$
are closely related to that of Fourier transforms over $G$, and it is
worth mentioning that there is a vast literature on fast algorithms
for Fourier transforms (over the base field $\C$). Relevant to our
current context, consider \citep{ClaMu04} and \citep{MaRockWol18} and
references therein for details. At this stage, it is not clear how we
can apply these methods in our context (where we work over an
arbitrary $\F$, not necessarily algebraically closed).

We would like to thank 
the referee for pointing out that~\cite{BuClSh97} address lower and 
upper bounds for multiplying in associative algebras and there is a 
possibility that these results could be applied to our problem of 
testing invertibility in $\F[G]$, though how to do this remains a topic for
future consideration.

This paper is written from the point of view of obtaining improved
asymptotic complexity estimates. Since our main goal is to highlight
the exponent (in $n$) in our runtime analyses, costs are given using
the soft-O notation: $S(n)$ is in $\tilde{O}(T(n))$ if it is in
$O(T(n) \log(T(n))^c)$, for some constant $c$.

The first main result of this paper is the following theorem; we use a
constant $\omega(4/3)$ that describes the cost of certain rectangular
matrix products (see the end of this section).
\begin{theorem}\label{thm:main}
  Under Assumption \ref{assum}, if $G$ is either abelian or
  metacyclic, one can test whether $\alpha \in \K$ is normal using
  $\thecost$ operations in $\F$, where
  $(3/4)\cdot\omega(4/3)<1.99$. The algorithms are randomized of the Monte Carlo type.
\end{theorem}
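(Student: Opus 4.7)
The plan is to prove Theorem~\ref{thm:main} by following the reduction sketched in the introduction: $\alpha\in\K$ is normal if and only if $S_\alpha=\sum_{g\in G}g(\alpha)g$ is a unit in $\K[G]$, but since $S_\alpha$ has $\Theta(n^2)$ coefficients in $\F$ one cannot afford to manipulate it directly. Instead I would pick a random $\F$-linear form $\ell:\K\to\F$ and work with $s_{\alpha,\ell}:=\sum_{g\in G}\ell(g(\alpha))\,g\in\F[G]$. A Schwartz--Zippel-style argument applied to $\det(S_\alpha)$, viewed as a polynomial in the coefficients of $\ell$, shows that if $S_\alpha$ is invertible then so is $s_{\alpha,\ell}$ with high probability; the converse is immediate by functoriality. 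This gives a correct one-sided Monte~Carlo reduction as soon as $|\F|$ is sufficiently large, which can always be arranged by working over a small algebraic extension of $\F$.

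The first computational task is then to evaluate $\ell$ on every element of the orbit $\{g(\alpha):g\in G\}$ within the announced budget. By Assumption~\ref{assum}, each $g\in G$ is encoded by $\gamma_g:=g(\xbar)\in\K$, so $g(\alpha)$ is obtained by composing $\alpha$ (as a polynomial in $\xbar$) with $\gamma_g$ modulo $P$. The natural approach is to adapt the baby-step/giant-step strategy of~\cite{KalSho98}: enumerate $G$ along a short generating decomposition (a direct product in the abelian case, a cyclic-by-cyclic semidirect product in the metacyclic case), so that the required collection of automorphisms factors through a small number of ``giant'' transposed modular compositions together with a family of ``baby'' ones. The baby-step data are then batched into a single rectangular matrix product whose aspect ratio is tuned so that the cost is $\tilde{O}(n^{(3/4)\omega(4/3)})$; the transposed/dual formulation of modular composition is what lets $\ell$ be pushed through the composition at that cost.

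The second computational task is to test whether $s_{\alpha,\ell}\in\F[G]$ is invertible, again within the same budget. In the abelian case, $\F[G]\simeq\F[x_1,\dots,x_m]/(x_1^{e_1}-1,\dots,x_m^{e_m}-1)$; invertibility then reduces to checking that $s_{\alpha,\ell}$ is coprime to $\prod_i(x_i^{e_i}-1)$, which can be done by successive univariate resultant/GCD computations in quasi-linear time in $n$, hence negligible compared to the first step. In the metacyclic case, writing the matrix of multiplication by $s_{\alpha,\ell}$ in a basis adapted to the normal cyclic subgroup and its quotient exposes a block-Hankel-like structure, so invertibility can be tested by structured-matrix techniques; choosing block sizes consistent with those used in the first step keeps the total cost within $\tilde{O}(n^{(3/4)\omega(4/3)})$.

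The main obstacle is the first step. A naive evaluation of $g(\alpha)$ for all $g\in G$ by independent modular compositions would cost roughly $\tilde{O}(n\cdot n^{(\omega+1)/2})$, which is far from subquadratic, so the required speedup hinges on two ingredients that must be combined carefully: dualising the modular-composition subroutine so that a single $\ell$ can be applied to many compositions at once, and choosing the baby/giant split so that the resulting rectangular products have exponent $\omega(4/3)$. Verifying that $(3/4)\omega(4/3)<1.99$ using the best known bounds on rectangular matrix multiplication, and that the block-Hankel invertibility test in the metacyclic case really fits in this budget, are the technical points where I would expect most of the work to lie.
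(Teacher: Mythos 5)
Your outline follows the paper's strategy closely: reduce to testing invertibility of $s_{\alpha,\ell}$ in $\F[G]$ via a random projection $\ell$, compute $s_{\alpha,\ell}$ by a baby-step/giant-step scheme whose bottleneck is a rectangular matrix product of aspect $n^{3/4}\times n^{3/4}$ by $n^{3/4}\times n$, and then handle invertibility in $\F[G]$ separately for abelian and metacyclic $G$, with a block-Hankel structured solver in the latter case. Two points deserve scrutiny.

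First, the claim that ``the converse is immediate by functoriality'' is not right as stated: $\ell$ is only $\F$-linear, not a ring map, so the induced map $\K[G]\to\F[G]$ does not a priori carry non-units to non-units. What saves you is that $\alpha$ non-normal means the conjugates $g_1(\alpha),\dots,g_n(\alpha)$ are $\F$-linearly dependent, so there is a non-zero $c\in\F^n$ with $\sum_i c_i\,g_jg_i(\alpha)=0$ for all $j$; applying $\ell$ entrywise preserves this relation, so $\mat M_\F(s_{\alpha,\ell})\,c=0$ for every $\ell$. This is a short argument (and somewhat more direct than the Galois-descent proof of Lemma~\ref{Lem:Proj} in the paper, which establishes the existence of an $\F$-rational kernel vector for any singular $\mat M_\K(S_\alpha)$), but it does need to be said.

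The more serious issue is the abelian invertibility test. You assert that coprimality of $s_{\alpha,\ell}$ with $\prod_i(x_i^{e_i}-1)$ can be checked by ``successive univariate resultant/GCD computations in quasi-linear time.'' Two problems: (i) the condition you state is not the right one --- invertibility modulo the ideal $\langle x_1^{e_1}-1,\dots,x_m^{e_m}-1\rangle$ is not the same as having no common factor with the product $\prod_i(x_i^{e_i}-1)$; and (ii) even granting the correct formulation (non-vanishing of the norm down to $\F$, say), the intermediate ring $\F[x_1,\dots,x_{m-1}]/\langle x_1^{e_1}-1,\dots\rangle$ is not a field, so Euclidean GCDs are unavailable, and the iterated resultant $\text{Res}_{x_m}(s,\,x_m^{e_m}-1)$ has degree roughly $e_m\cdot e_i$ in each remaining variable before reduction, so the cost is far from $\tilde O(n)$ and is not controlled by your sketch. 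The paper instead decomposes $\F[G]$ into a product of univariate algebras $\F[z]/\langle\Phi_d(z)\rangle$ in softly linear time, by iterating two CRT-style isomorphisms (one gluing cyclotomic factors of coprime orders into a single cyclotomic algebra, the other splitting off factors of the same prime-power order using roots of unity already present in $\mathbbm a=\F[x]/\Phi_{p^c}(x)$); this is the key ingredient you are missing, and it is what makes the abelian invertibility test genuinely quasi-linear rather than merely ``polynomial and hopefully cheap enough.''
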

Once $\alpha$ is known to be normal, we also discuss the cost 
of conversion between the power basis $1,\xbar,\dots,\xbar^{n-1}$
of $\K$ and its normal basis $G\cdot \alpha$. The conversion problem between normal and 
power bases are discussed in~\cite{KalSho98} (randomized) and~\cite{Sergeev} 
(deterministic) with different assumptions. Inspired by
previous work of~\cite{KalSho98}, we obtain the following results.
\begin{theorem}\label{thm:main2}
  Under Assumption \ref{assum}, if $G$ is either abelian or metacyclic
  and $\alpha \in \K$ is known to be normal, we can perform basis
  conversion between the power basis $1,\xbar,\dots,\xbar^{n-1}$ of
  $\K$ and its normal basis $G\cdot \alpha$ using $\thecost$
  operations in $\F$. The algorithms are randomized of the Monte Carlo type.
\end{theorem}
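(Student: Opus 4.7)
The plan is to reduce basis conversion, in each direction, to the same kind of rectangular matrix product that drives Theorem~\ref{thm:main}, following the duality-based strategy of~\cite{KalSho98}. Let $N \in \F^{n \times n}$ be the matrix whose columns are the coordinates, in the power basis $1, \xbar, \ldots, \xbar^{n-1}$, of the Galois conjugates $g(\alpha)$, $g \in G$. Converting from normal to power coordinates is application of $N$, and the reverse is application of $N^{-1}$; the naive cost in either direction is $\Theta(n^2)$, and the target is $\thecost$.

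For the normal-to-power direction, applying $N$ to a vector $c = (c_g)_{g \in G}$ amounts to evaluating the map $\sum_g c_g g \in \F[G] \mapsto \sum_g c_g g(\alpha) \in \K$. Under the abelian decomposition $\F[G] \cong \F[x_1,\ldots,x_m]/(x_1^{e_1}-1,\ldots,x_m^{e_m}-1)$, or the block-Hankel description of $\F[G]$ established in the metacyclic case of Theorem~\ref{thm:main}, this evaluation can be reorganized via a baby-step/giant-step split of $G$: the baby steps precompute a small collection of conjugates of $\alpha$, and the giant step assembles the answer through a single rectangular matrix product whose aspect ratio is precisely the one measured by $\omega(4/3)$. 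This gives cost $\thecost$.

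For the power-to-normal direction, I would pass to the dual normal basis. Because $\alpha$ is normal, $S_\alpha = \sum_g g(\alpha) g$ is invertible in $\K[G]$, and unpacking $S_\alpha^{-1}$ yields an element $\alpha^* \in \K$ whose orbit $\{g(\alpha^*) : g \in G\}$ is the dual of the normal basis $\{g(\alpha)\}$ with respect to the trace form: $\mathrm{Tr}_{\K/\F}(g(\alpha) h(\alpha^*)) = \delta_{g,h}$. Consequently, for $\beta \in \K$ given on the power basis, the normal coordinates of $\beta$ are the $n$ trace inner products $c_g = \mathrm{Tr}_{\K/\F}(\beta \cdot g(\alpha^*))$. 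Computing $\alpha^*$ is one inversion in $\K[G]$, which by the random-projection device underlying Theorem~\ref{thm:main} reduces to inverting $s_{\alpha,\ell} \in \F[G]$ and lifting, and therefore fits in $\thecost$. The simultaneous extraction of all $c_g$ from $\beta$ is, up to transposition, a normal-to-power evaluation built from $\alpha^*$; invoking the transposition principle keeps it within the same asymptotic budget.

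The main obstacle is arranging the normal-to-power evaluation as a rectangular product of precisely the aspect ratio required by $\omega(4/3)$, over an arbitrary field of characteristic zero where no Fourier transform on $G$ is available. In the abelian case one must group the cyclic factors $\Z/e_i\Z$ into two blocks of suitable sizes regardless of the multiset $(e_1,\ldots,e_m)$; in the metacyclic case the split must respect the semidirect product structure so that multiplication in $\F[G]$ remains block-Hankel after the reorganization. Both difficulties have effectively been surmounted in the proof of Theorem~\ref{thm:main}, so the proof of Theorem~\ref{thm:main2} will consist mainly in recognizing that the inversion step producing $\alpha^*$, the evaluation step $\sum_g c_g g \mapsto \sum_g c_g g(\alpha^*)$, and its transpose all plug into that same framework.
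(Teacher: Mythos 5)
Your normal-to-power direction is essentially the paper's approach (transpose of the projected-orbit-sum computation via baby-steps/giant-steps), although you unnecessarily invoke the abelian factor decomposition of $\F[G]$ and the block-Hankel structure at this stage: those tools are used only for \emph{division} in $\F[G]$, not for the conversion-from-normal map, which works for any polycyclic $G$ straight from Lemma~\ref{lem:selfcomp}.

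Your power-to-normal direction is a genuinely different route, and it has a gap. You propose to compute the dual normal element $\alpha^*$ with $\mathrm{Tr}_{\K/\F}(g(\alpha)h(\alpha^*)) = \delta_{g,h}$, and then read off the coordinates as $c_g = \mathrm{Tr}_{\K/\F}(\beta\, g(\alpha^*))$. The second half is fine: once $\alpha^*$ is known, the family of traces is exactly a projected orbit sum $s_{\alpha^*,\ell}$ with $\ell = \mathrm{Tr}_{\K/\F}(\beta\,\cdot\,)$, which Section~\ref{sec:osum} handles in $\thecost$. The problem is the first half. You say computing $\alpha^*$ ``reduces to inverting $s_{\alpha,\ell} \in \F[G]$ and lifting,'' but the projection $\ell: \K \to \F$ is lossy and there is no lifting step: $s_{\alpha,\ell}^{-1} \in \F[G]$ does not determine $S_\alpha^{-1} \in \K[G]$, and $S_\alpha^{-1}$ (or even its relevant row/column) has $\Theta(n^2)$ entries in $\F$, so ``unpacking'' it is exactly the quadratic bottleneck the whole paper is trying to avoid. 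You would need a separate subquadratic algorithm producing $\alpha^*$ itself, and you have not supplied one.

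The paper sidesteps computing $\alpha^*$ entirely. It applies every $g_j$ to the target identity $\sum_i c_i g_i(\alpha) = u$ and then a single random projection $\ell$, obtaining the $\F[G]$-equation
\[
 s_{\alpha,\ell}\, u' \;=\; s_{u,\ell}, \qquad u' = \sum_i c_i\, g_i^{-1} \in \F[G].
\]
Both $s_{\alpha,\ell}$ and $s_{u,\ell}$ are computed in $\thecost$ by Proposition~\ref{prop:polycyclic}; since $\alpha$ is normal, $s_{\alpha,\ell}$ is generically a unit, and the division algorithms of Section~\ref{sec:invertibility} (quasi-linear for abelian $G$, $\tilde{O}(n^{(\omega+1)/2})$ for metacyclic $G$) recover $u'$, hence the $c_i$, within budget. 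Note this is where, and only where, the abelian/metacyclic hypothesis enters. Your dual-basis idea could in principle be repaired by observing that the coefficients of $u'$ in the paper's formulation already encode $c_g = \mathrm{Tr}(\beta g(\alpha^*))$ implicitly, but as written the reliance on explicitly producing $\alpha^*$ is unsupported.
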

In both theorems, the runtime is barely subquadratic, and the exponent
$1.99$ is obtained through fast matrix multiplication algorithms that
are most likely impractical for reasonable $n$. However, these results
show in particular that we can perform basis conversions without
writing down the normal basis itself (which would require
$\Theta(n^2)$ elements in $\F$).
\begin{remark}\label{rmk:mc-prob}
  Both above algorithms are randomized of the Monte Carlo type. In our
  model, this means that they are allowed to draw random elements for a
  prescribed subset of $\F$, and for a control parameter $\epsilon$,
  produce the correct answer with probability greater than
  $1-\epsilon$ (see Remark \ref{rmk:mc-epsilon}).
\end{remark}

Section \ref{sec:pre} of this paper is devoted to definitions and
preliminary discussions.  In Section \ref{sec:osum}, a
subquadratic-time algorithm is presented for the randomized reduction
of our main question to invertibility testing in $\F[G]$; this
algorithm applies to any finite polycyclic group, and in particular to
abelian and metacyclic groups. In Section \ref{sec:invertibility}, we
show that the problems of testing invertibility in $\F[G]$ and
performing divisions can be solved in quasi-linear time for an abelian
group; for metacyclic groups, we give a subquadratic time algorithm
based on structured linear algebra algorithms (this will finish the
proof of Theorem~\ref{thm:main}). Finally,
Section~\ref{sec:conversion} proves Theorem~\ref{thm:main2}.

Our algorithms make extensive use of known algorithms for polynomial and
matrix arithmetic; in particular, we use repeatedly the fact that
polynomials of degree $n$ in $\F[x]$, for any field $\F$ of
characteristic zero, can be multiplied in $\tilde{O}(n)$ operations in
$\F$~\citep{ScSt71}. As a result, arithmetic operations
$(+,\times,\div)$ in $\K$ can all be done using $\tilde{O}(n)$
operations in $\F$~\citep{vzGathen13}. We also assume that generating a
random element in $\F$ takes constant time.

For matrix arithmetic, we will rely on some non-trivial results on
rectangular matrix multiplication initiated by \cite{LoRo83}. For $k
\in \mathbb{R}$, we denote by $\omega(k)$ a constant such that over
any ring, matrices of sizes $(n,n)$ by $(n,\lceil n^k \rceil)$ can be
multiplied in $O(n^{\omega(k)})$ ring operations (so $\omega(1)$ is
the usual exponent of square matrix multiplication, which we simply
write $\omega$).  The sharpest values known to date for most
rectangular formats are by~\cite{LeGall}; for $k=1$, the best known
value is $\omega \le 2.373$ by \citeN{LeGall14}.  Over a field,
further matrix operations (such as inversion) can also be done in
$O(n^\omega)$ base field operations.

Part of the results of this paper (Theorem~\ref{thm:main} for abelian
groups) were already published in the conference
paper~\citep{GiJaSc19}.

 \section{Preliminaries}
\label{sec:pre}

One of the well-known proofs of the existence of a normal element for a
finite Galois extension, as for example reported by \citet[Theorem 6.13.1]{Lang}, suggests a randomized
algorithm for finding such an element. Assume $\K/\F$ is a finite Galois
extension with Galois group $G = \lbrace g_1 , \ldots , g_n \rbrace$. If
$\alpha \in \K$ is a normal element, then
\begin{equation}
  \label{eq:fstrow}
  \sum_{j=1}^n 
  c_j g_j(\alpha)=0, \,\,\, c_j \in \F 
\end{equation} 
implies $c_1 =\dots=c_n = 0$. For each
$i \in \lbrace 1, \ldots , n\rbrace$, applying $g_i$ to equation
\eqref{eq:fstrow} yields
\begin{equation} \label{eq:otherrow} \sum_{j=1}^n c_j g_i g_j(\alpha)=0.
\end{equation}
Using \eqref{eq:fstrow} and \eqref{eq:otherrow}, one can form the linear
system $\mat{M}\boldsymbol{c} = \textbf{0}$,
with $\boldsymbol{c} = [ c_1~ \cdots~c_n]^T$ and
where, for $\alpha\in\K$,
\begin{equation}\label{eqdef:M}
  \mat M =
  \begin{bmatrix}
    g_1 g_1(\alpha) & g_1 g_2(\alpha) & \cdots & g_1 g_n(\alpha) \\
    g_2 g_1(\alpha) & g_2 g_2(\alpha) & \cdots & g_2 g_n(\alpha) \\
    \vdots		& \vdots	& \vdots & \vdots \\
    g_n g_1(\alpha) & g_n g_2(\alpha) & \cdots & g_n g_n(\alpha) \\
  \end{bmatrix} \in M_n(\K).
\end{equation}
Classical proofs then proceed to show that there exists $\alpha \in \K$
with $\det(\mat M)\neq 0$.
 
This approach can be used as the basis of a procedure to test if a
given $\alpha\in\K$ is normal, by computing all the entries of the
matrix $\mat M$ and using linear algebra to compute its determinant;
using fast matrix arithmetic this requires $O(n^\omega)$ operations in
$\K$, that is $\tilde{O}(n^{\omega+1})$ operations in $\F$. This is at
least cubic in $n$; the main contribution of this paper is to show how
to speed up this verification.
 
Before entering that discussion, we briefly comment on the probability
that $\alpha$ be a normal element: if we write $\alpha = a_0 + \cdots
+ a_{n-1} \xbar^{n-1}$, the determinant of $\mat M$ is a
(not identically zero) homogeneous polynomial of degree $n$ in
$(a_0,\dots,a_{n-1})$. If the $a_i$'s are chosen uniformly at random
in a finite set $X \subset \F$, the Lipton-DeMillo-Schwartz-Zippel lemma
implies that the probability that $\alpha$ be normal is at least
$1-n/|X|$.

If $G$ is cyclic generated by an element $g$, with $g_1 = {\rm id}$
and $g_{i+1} = g g_i$ for all $i$, \cite{GatGie90} avoid computing a
determinant by computing the GCD of $S_\alpha := \sum_{i = 1}^{n}
g_i(\alpha)x^{i-1}$ and $x^n-1$. In effect, this amounts to testing
whether $S_\alpha$ is invertible in the group ring $\K[G]$, which is
isomorphic to $\K[x]/\langle x^n-1\rangle$. This is a general fact:
for any $G$, matrix $\mat M$ above is the matrix of left
multiplication by the orbit sum
$$S_\alpha:= \sum_{i=1}^n g_i(\alpha)g_i \in \K[G],$$ where we index
rows by $g_1,\dots,g_n$ and columns by their inverses
$g_1^{-1},\dots,g_n^{-1}$. In terms of notation, for any field
$\mathsf L$ (typically, we will take either $\mathsf L = \F$ or
$\mathsf L = \K$), and $\beta$ in $\mathsf L[G]$, we will write $\mat
M_{\mathsf L}(\beta)$ for the left multiplication matrix by $\beta$ in
$\mathsf L[G]$, using the two bases shown above.  In other words, the
matrix $\mat M$ of~\eqref{eqdef:M} is $\mat M_\K(S_\alpha)$.

The previous discussion shows that $\alpha$ being normal is equivalent
to $S_\alpha$ being a unit in $\K[G]$. This point of view may make it
possible to avoid linear algebra of size $n$ over $\K$, but writing
$S_\alpha$ itself still involves $\Theta(n^2)$ elements in $\F$. The
following lemma is the main new ingredient in our algorithm: it gives
a randomized reduction to testing whether a suitable projection of
$S_\alpha$ in $\F[G]$ is a unit.

\begin{lemma}
  \label{Lem:Proj}
  For $\alpha \in \K$, $\mat M_\K(S_\alpha)$ is invertible if and only
  if $$\ell(\mat M_\K(S_\alpha)) := [\ell(g_ig_j(\alpha))]_{ij} \in M_n(\F)$$
  is invertible for a generic $\F$-linear projection $\ell: \K \to \F$.
\end{lemma}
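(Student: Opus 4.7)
The plan is to reformulate the stated equivalence as a question about units in the group algebras $\F[G]$ and $\K[G]$, and then to link the two via a short multiplicative identity. First, entrywise application of $\ell$ to $\mat M_\K(S_\alpha)$ produces exactly the multiplication matrix $\mat M_\F(s_{\alpha,\ell})$ of the element
$$ s_{\alpha,\ell} \;:=\; \sum_{g\in G} \ell(g(\alpha))\, g \;\in\; \F[G], $$
so $\ell(\mat M_\K(S_\alpha))$ is invertible over $\F$ if and only if $s_{\alpha,\ell}$ is a unit of $\F[G]$. A Galois-descent step---namely, that the inverse in $\K[G]$ of a $G$-invariant element is itself $G$-invariant, since the coefficientwise $G$-action is a ring automorphism of $\K[G]$---identifies this in turn with $s_{\alpha,\ell}$ being a unit of the larger algebra $\K[G]$.

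The heart of the argument is the following identity, which I would verify by direct expansion. The characteristic-zero assumption makes the trace form $(u,v)\mapsto\mathrm{Tr}_{\K/\F}(uv)$ non-degenerate, so every $\F$-linear $\ell:\K\to\F$ can be written uniquely as $\ell(\beta)=\mathrm{Tr}_{\K/\F}(a\beta)$ for some $a\in\K$, with $\ell\mapsto a$ a bijection. Setting $T_a := \sum_{h\in G} h(a)\, h^{-1} \in \K[G]$, expanding $\ell(g(\alpha)) = \sum_h h(a)\,(hg)(\alpha)$ and reindexing the resulting double sum yields
$$ s_{\alpha,\ell} \;=\; T_a \cdot S_\alpha \qquad \text{in } \K[G]. $$
Moreover $T_a$ is the image of $S_a$ under the anti-involution $g\mapsto g^{-1}$ of $\K[G]$, so $T_a$ is a unit if and only if $S_a$ is.

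To conclude, I would invoke the fact that $\K[G]$ is a finite-dimensional algebra over a field, hence a product $xy$ is a unit if and only if each factor is a unit. Combined with the identity $s_{\alpha,\ell} = T_a S_\alpha$, this shows that $s_{\alpha,\ell}$ is a unit iff both $a$ and $\alpha$ are normal in $\K$. The normal elements form a non-empty Zariski-open subset of $\K$ (existence part of the normal basis theorem, together with the degree-$n$ non-vanishing argument already recalled in the paper), so for generic $\ell$ the corresponding $a$ is normal, and the condition ``$s_{\alpha,\ell}$ is a unit'' reduces to ``$\alpha$ is normal'', i.e., to invertibility of $\mat M_\K(S_\alpha)$. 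Both directions of the stated equivalence follow at once.

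The principal obstacle I anticipate is locating and cleanly verifying the identity $s_{\alpha,\ell} = T_a S_\alpha$; the remaining ingredients---Galois descent for inverses, the product-of-units principle in a finite-dimensional algebra, and density of normal elements---are either routine or already in the paper's preliminaries.
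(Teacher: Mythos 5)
Your proof is correct, but it takes a genuinely different route from the paper's. The paper argues entirely with matrices: for the forward direction it views $\det\bigl(\ell(\mat M_\K(S_\alpha))\bigr)$ as a polynomial in indeterminates $L_0,\dots,L_{n-1}$ standing in for $\ell(1),\dots,\ell(\xbar^{n-1})$, and observes that this polynomial is nonzero because specializing $L_k\mapsto\xbar^k$ recovers $\det(\mat M_\K(S_\alpha))$; for the converse it shows, via a minimal-support kernel vector and a Galois-averaging argument, that a singular $\mat M_\K(S_\alpha)$ already has a kernel vector with entries in $\F$, which then annihilates $\ell(\mat M_\K(S_\alpha))$ for \emph{every} $\ell$. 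You instead transport the whole question into $\K[G]$: you represent $\ell$ as $\beta\mapsto\mathrm{Tr}_{\K/\F}(a\beta)$ via non-degeneracy of the trace form, verify the factorization $s_{\alpha,\ell}=T_a\, S_\alpha$ (a short reindexing computation that I have checked), use the anti-involution $g\mapsto g^{-1}$ to identify invertibility of $T_a$ with invertibility of $S_a$, and invoke the fact that in a finite-dimensional $\F$-algebra a product is a unit iff both factors are. This yields the sharper statement that $\ell(\mat M_\K(S_\alpha))$ is invertible \emph{precisely} when both $a$ and $\alpha$ are normal, from which both directions of the lemma follow after noting that normal elements form a non-empty open set. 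Comparing the two: the paper's argument is more elementary and self-contained (no trace form, no unit-product lemma, and the converse does not even need genericity: singularity over $\K$ kills $\ell(\mat M_\K(S_\alpha))$ for all $\ell$); your argument is more structural and more informative, since it identifies the exceptional locus of $\ell$ exactly as those with non-normal $a$, and lines up nicely with the later interpretation of $\ell(\mat M_\K(S_\alpha))$ as the multiplication matrix of $s_{\alpha,\ell}$ in $\F[G]$. One small remark: your Galois-descent step (passing between units of $\F[G]$ and of $\K[G]$) can be replaced by the simpler observation that an element of $\F[G]$ has multiplication matrix and determinant already defined over $\F$, so its being a unit is unchanged under base extension; but the descent argument you give is also fine.
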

\begin{proof}
  $(\Rightarrow)$ For a fixed $\alpha\in\K$, any entry of
  $\mat M_\K(S_\alpha)$ can be written as
  \begin{equation}\label{Eq:PrimElm}
    \sum_{k= 0}^{n-1} a_{ijk}\xbar^k, \; a_{ijk} \in F
  \end{equation}
  and for $\ell: \K \to \F$, the corresponding entry in $\ell(\mat
  M_\K(S_\alpha))$ can be written $\sum_{k= 0}^{n-1} a_{ijk}\ell_k$, with
  $\ell_k = \ell(\xbar^k)$. Replacing these $\ell_k$'s by
  indeterminates $L_k$'s, the determinant becomes a polynomial in $P
  \in \F[L_1, \ldots, L_n].$ Viewing $P$ in $\K[L_1, \ldots, L_n]$, we
  have $ P(1, \xbar, \ldots, \xbar^{n-1})$ $= \det(\mat M_\K(S_\alpha))$,
  which is non-zero by assumption. Hence, $P$ is not identically zero,
  and the conclusion follows.
  
  $(\Leftarrow)$ Assume $\mat M_\K(S_\alpha)$ is not invertible. Following the
  proof of \citet[Lemma 4]{Jam18}, we first show that there exists a
  non-zero $\boldsymbol{u} \in \F^n$ in the kernel of $\mat M_\K(S_\alpha)$.
  
  The elements of $G$ act on rows of $\mat M_\K(S_\alpha)$ entrywise and the
  action permutes the rows the matrix. Assume
  $\varphi : G \to \mathfrak{S}_n$ (where $\mathfrak{S}_n$ is the full symmetric group) is the group homomorphism such that
  $g(\mat M_i) = \mat M_{\varphi(g)(i)}$ for all $i$, where $\mat M_i$ is
  the $i$-th row of $\mat M_\K(S_\alpha)$.
  
  Since $\mat M_\K(S_\alpha)$ is singular, there exists a non-zero
  $\boldsymbol{v} \in \K^n$ such that $\mat M_\K(S_\alpha)\boldsymbol{v} = 0$;
  we choose $\boldsymbol{v}$ having the minimum number of non-zero
  entries. Let $i \in \lbrace 1, \ldots , n \rbrace$ such that
  $v_i \neq 0$. Define $\boldsymbol{u} = v_i^{-1}\boldsymbol{v}$. Then,
  $\mat M_\K(S_\alpha)\boldsymbol{u} = 0,$ which means
  $\mat M_j \boldsymbol{u} = 0 $ for $j \in \lbrace 1, \ldots, n
  \rbrace$. For $g \in G$, we have
  $g(\mat M_j \boldsymbol{u}) = \mat M_{\varphi(g)(j)} g(\boldsymbol{u})=
  0.$ Since this holds for any $j$, we conclude that
  $\mat M_\K(S_\alpha)g(\boldsymbol{u})= 0$, hence
  $g(\boldsymbol{u})-\boldsymbol{u}$ is in the kernel of
  $\mat M_\K(S_\alpha)$. On the other hand since the $i$-th entry of
  $\boldsymbol{u}$ is one, the $i$-th entry of
  $g(\boldsymbol{u}) -\boldsymbol{u}$ is zero. Thus the minimality
  assumption on $\textbf{v}$ shows that
  $g(\boldsymbol{u}) -\boldsymbol{u} = 0$, equivalently
  $g(\boldsymbol{u})=\boldsymbol{u}$, and hence $\boldsymbol{u} \in \F^n$.
  
  Now we show that for all choices of $\ell$, $\ell(\mat M_\K(S_\alpha))$ is not invertible. By Equation \eqref{Eq:PrimElm}, we can write
  $$\mat M_\K(S_\alpha) = \sum_{j = 0}^{n-1} \mat M^{(j)} \xbar^j, \quad 
  \mat M^{(j)} \in M_{n}(\F) \text{~for all $j$}.$$ 
  Since $\boldsymbol{u}$ has entries in $\F$,
  $\mat M_\K(S_\alpha) \boldsymbol{u} =0$ yields
  $\mat M^{(j)}\boldsymbol{u} = 0$ for
  $j \in \lbrace 1, \ldots , n \rbrace$. Hence,
$$\sum_{j = 0}^{n-1} \mat M^{(j)} \ell_j \boldsymbol{u} = 0$$ for any 
$\ell_j$'s in $\F$, and $\ell(\mat M_\K(S_\alpha))$ is not invertible for any~$\ell$.
\end{proof} 
Our algorithm can be sketched as follows: given $\alpha$ in $\K$,
choose a random $\ell: \K\to\F$, and let
\begin{equation}\label{def:s_alpha_ell}
s_{\alpha,\ell}:=\sum_{i=1}^n \ell(g_i(\alpha))g_i \in \F[G].
\end{equation}
Note that $\ell(\mat M_\K(S_\alpha))$ is equal to $\mat
M_\F(s_{\alpha,\ell})$, that is, the multiplication matrix by
$s_{\alpha,\ell}$ in $\F[G]$, where, as above, we index rows by
$g_1,\dots,g_n$ and columns by $g_1^{-1},\dots,g_n^{-1}$.  Then,
the previous lemma can be rephrased as follows:
\begin{lemma}
  \label{Lem:Proj-bis}
  For $\alpha \in \K$, $\alpha$ is normal if and only if
  $s_{\alpha,\ell}$ is invertible in $\F[G]$ for a generic
  $\F$-linear projection $\ell: \K \to \F$.
\end{lemma}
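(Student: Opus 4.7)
The plan is to prove this as a short chain of equivalences that repackages Lemma~\ref{Lem:Proj} in the language of the group algebra $\F[G]$. The key observation, already recorded in the paragraph preceding the statement, is the identity
\[
\ell(\mat M_\K(S_\alpha)) = \mat M_\F(s_{\alpha,\ell}),
\]
which follows directly by expanding $\ell(g_i(\alpha)) = \sum_{k} a_{ik} \ell(\xbar^k)$ and comparing entries with the definition \eqref{def:s_alpha_ell} of $s_{\alpha,\ell}$. I would state this once at the beginning of the proof and then use it as the hinge connecting the matrix picture to the group algebra picture.

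Next I would invoke the standard fact that, for any field $\mathsf L$ and any $\beta \in \mathsf L[G]$, the element $\beta$ is a unit in $\mathsf L[G]$ if and only if its left-multiplication matrix $\mat M_{\mathsf L}(\beta)$ is invertible over $\mathsf L$ (this is just the regular representation of the finite-dimensional $\mathsf L$-algebra $\mathsf L[G]$). Applied with $\mathsf L = \K$ and $\beta = S_\alpha$, this says that $\alpha$ is normal if and only if $\mat M_\K(S_\alpha)$ is invertible, which is exactly how the proof strategy of the paper was introduced. Applied with $\mathsf L = \F$ and $\beta = s_{\alpha,\ell}$, it says that $s_{\alpha,\ell}$ is a unit in $\F[G]$ if and only if $\mat M_\F(s_{\alpha,\ell})$ is invertible.

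Combining these two equivalences with Lemma~\ref{Lem:Proj} produces the desired equivalence: $\alpha$ is normal iff $\mat M_\K(S_\alpha)$ is invertible iff (by Lemma~\ref{Lem:Proj}) $\ell(\mat M_\K(S_\alpha))$ is invertible for generic $\ell$, iff $\mat M_\F(s_{\alpha,\ell})$ is invertible for generic $\ell$, iff $s_{\alpha,\ell}$ is a unit in $\F[G]$ for generic $\ell$.

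There is really no hard step here; the whole content was already extracted in Lemma~\ref{Lem:Proj}. The only thing to be careful about is to make precise what ``generic'' means in both lemmas, namely that the set of bad $\ell \in \mathrm{Hom}_\F(\K,\F)$ is contained in the zero set of a fixed non-zero polynomial in the coefficients $\ell_0,\dots,\ell_{n-1}$ of $\ell$ in the dual basis of $1,\xbar,\dots,\xbar^{n-1}$; this is exactly the polynomial $P$ constructed in the proof of Lemma~\ref{Lem:Proj}, so the same notion of genericity transports verbatim. I would conclude with a brief remark that this is the formulation actually used by the algorithm, since it reduces the problem from linear algebra over $\K$ to invertibility testing in $\F[G]$.
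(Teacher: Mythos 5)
Your proof is correct and matches the paper's intent exactly: the paper presents Lemma~\ref{Lem:Proj-bis} with no separate proof, stating only that it is a rephrasing of Lemma~\ref{Lem:Proj} via the identity $\ell(\mat M_\K(S_\alpha)) = \mat M_\F(s_{\alpha,\ell})$ and the fact that normality of $\alpha$ is equivalent to $S_\alpha$ being a unit in $\K[G]$, which are precisely the two links you make explicit. Your filling in of the regular-representation argument and the clarification of what ``generic'' means are faithful elaborations rather than a different route.
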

Thus, once $s_{\alpha,\ell}$ is known, we are left with testing
whether it is a unit in $\F[G]$. In the next two sections, we address
the respective questions of computing $s_{\alpha,\ell}$, and testing
its invertibility in $\F[G]$.
\begin{remark}\label{rmk:mc-epsilon}
If $\alpha$ is not normal, $S_\alpha$ is not a unit. In this case, the
proof of Lemma \ref{Lem:Proj} established that $s_{\alpha,\ell}$ is
not a unit for {\em any} $\ell$, so our algorithm always returns the
correct answer in this case.

If $\alpha$ is normal, the polynomial $P$ in the proof of Lemma
\ref{Lem:Proj}, is a homogeneous polynomial of degree $n$ in $(L_1,
\ldots , L_n)$. Thus, if we choose the coefficients of $\ell$ uniformly at
random in any fixed finite subset $X \subset \F$, by the
Lipton-DeMillo-Schwartz-Zippel lemma, we return the correct 
answer with probability at least $1-n/|X|$. 

By Lemma \ref{Lem:Proj}, if $\ell(\mat M_\K(S_\alpha))$ is invertible, then $\alpha$ is normal. 
As one of the referees pointed out, one can use this as a Las Vegas 
algorithm which returns a normal element.
\end{remark}

 \section{Computing projections of the orbit sum}
\label{sec:osum}

In this section we present an algorithm to compute $s_{\alpha,\ell}$
when $G=\{g_1,\dots,g_n\}$ is {\em polycyclic} (we give a definition
of this family of groups and recall some well known results about them
in Subsection~\ref{ssec:proj_abelian}). To motivate our algorithm,
we start by the simple case of a {\em cyclic} group.  We will see that
they follow closely ideas used by \cite{KalSho98} over finite fields.

Suppose $G = \langle g \rangle$, so that given $\alpha$ in $\K$ and
$\ell: \K \to \F$, our goal is to compute
\begin{equation}
  \label{eq:cycproj}
  \ell(g^i(\alpha)), ~~\mbox{for}~ 0\leq i\leq n-1.
\end{equation}
\cite{KalSho98} call this the \emph{automorphism projection problem} and
gave an algorithm to solve it in subquadratic time, when $g$ is the
$q$-power Frobenius $\mathbb{F}_{q^n} \to \mathbb{F}_{q^n}$.  The key idea in their
algorithm is to use the baby-steps/giant-steps technique: for a suitable
parameter $t$, the values in \eqref{eq:cycproj} can be rewritten as
\[
  (\ell \circ g^{tj})(g^i(\alpha)), ~~\mbox{for}~ 0 \leq j < m:=\lceil n/t
  \rceil ~\mbox{and}~ 0 \leq i <t.
\]
First, we compute all $G_i:=g^i(\alpha)$ for $0 \leq i <t$.  Then we compute
all $L_j:=\ell \circ g^{tj}$ for $0 \leq j <m$, where the $L_j$'s are
themselves linear mappings $\K \to \F$.  Finally, a matrix product yields
all values $L_j(G_i)$.

The original algorithm of \cite{KalSho98} relies on the properties of
the Frobenius mapping to achieve subquadratic runtime. In our case, we
cannot apply these results directly; instead, we have to revisit the
proofs of~\cite[Lemmata 3 and 4]{KalSho98}, now considering
rectangular matrix multiplication.  Our exponents involve the constant
$\omega(4/3)$, for which we have the upper bound $\omega(4/3) <
2.654$: this follows from the upper bounds on $\omega(1.3)$ and
$\omega(1.4)$ given by~\cite{LeGall}, and the fact that $k \mapsto
\omega(k)$ is convex~\citep{LoRo83}. In particular, $3/4 \cdot
\omega(4/3) < 1.99$. Note also the inequality $\omega(k) \ge 1+k$ for
$k\ge 1$, since $\omega(k)$ describes products with input and output
size $O(n^{1+k})$.


\subsection{Multiple automorphism evaluation and applications}

The key to the algorithms below is the remark following
Assumption~\ref{assum}, which reduces automorphism evaluation to
modular composition of polynomials.  Over finite fields, this idea
goes back to~\cite{GaSh92}, where it is credited to Kaltofen.

For instance, given $g \in G$ (by means of $\gamma:=g(\xbar)$), we can
deduce $g^2 \in G$ (again, by means of its image at $\xbar$) as
$\gamma(\gamma)$; this can be done with $\tilde{O}(n^{(\omega+1)/2})$
operations in $\F$ using Brent and Kung's modular composition
algorithm~\citep{BrKu78}. The algorithms below describe similar
operations along these lines, involving several simultaneous
evaluations. In this subsection, we work under Assumption~\ref{assum}
and we make no special assumption on $G$.

\begin{lemma}
  \label{lem:modcom}
  Given $\alpha_1,\dots,\alpha_s$ in $\K$ and $g$ in $G =
  \mathrm{Gal}(\K/\F)$, with $s = O(\sqrt{n})$, we can compute
  $g(\alpha_1),\dots,g(\alpha_s)$ with $\tilde
  O(n^{(3/4)\cdot\omega(4/3)})$ operations in $\F$.
\end{lemma}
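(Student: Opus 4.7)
The plan is to reduce the task to a single rectangular matrix product over $\F$ using a Brent--Kung style baby-step/giant-step, then to choose the block size so that the product has the aspect ratio whose cost is captured by $\omega(4/3)$. The remark after Assumption~\ref{assum} already tells us that applying $g$ to any $\alpha_i$ is the modular composition $\alpha_i(\gamma) \bmod P$ with $\gamma := g(\xbar)$, so the problem is to perform $s = O(\sqrt n)$ such compositions, all sharing the same inner polynomial $\gamma$ and outer polynomials of degree less than $n$.

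Fix a block size $t$, let $m := \lceil n/t \rceil$, and expand each $\alpha_i$ in base $x^t$ as
\begin{equation*}
\alpha_i(x) \;=\; \sum_{j=0}^{m-1} \alpha_{i,j}(x)\, x^{t j}, \qquad \deg \alpha_{i,j} < t.
\end{equation*}
The algorithm then has three phases. (i)~\emph{Baby steps:} compute $\gamma^0,\gamma^1,\dots,\gamma^{t-1} \bmod P$ using $t - 1$ multiplications in $\K$. (ii)~Form the matrix $A \in \F^{sm \times t}$ whose rows list the coefficients of the $\alpha_{i,j}$ and the matrix $B \in \F^{t \times n}$ whose $k$-th row holds the coefficients of $\gamma^k \bmod P$, then compute $C := A B \in \F^{sm \times n}$; the row of $C$ indexed by $(i,j)$ then gives the coefficients of $\alpha_{i,j}(\gamma) \bmod P$. (iii)~\emph{Giant steps:} compute $\gamma^{t j} \bmod P$ for $j = 1,\dots,m-1$ with $m-1$ multiplications in $\K$, and assemble $g(\alpha_i) = \sum_{j=0}^{m-1} \alpha_{i,j}(\gamma)\, \gamma^{t j} \bmod P$ for $i = 1,\dots,s$ with $sm$ further multiplications in $\K$.

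Choose $t := \lceil n^{3/4} \rceil$, so $m = O(n^{1/4})$ and the hypothesis $s = O(\sqrt n)$ gives $sm = O(n^{3/4}) = O(t)$. Then the product in phase~(ii) has shape $n^{3/4} \times n^{3/4}$ by $n^{3/4} \times n$ up to constants, and since $n = (n^{3/4})^{4/3}$, this is precisely the format whose cost is $O\bigl((n^{3/4})^{\omega(4/3)}\bigr) = O(n^{(3/4)\,\omega(4/3)})$ by definition of $\omega(4/3)$. Phases~(i) and~(iii) use $O(t + m + sm) = O(n^{3/4})$ multiplications in $\K$, that is $\tilde O(n^{7/4})$ operations in $\F$; by the inequality $\omega(k) \ge 1 + k$ recalled in the introduction we have $(3/4)\,\omega(4/3) \ge 7/4$, so these costs are dominated by phase~(ii). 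The only real design decision is the choice of $t$: the hypothesis $s = O(\sqrt n)$ is precisely what matches $sm$ to $t$, forcing the rectangular product in~(ii) onto the favorable exponent $\omega(4/3)$; any asymptotically larger $s$ would push us to a worse rectangular shape and break the target bound.
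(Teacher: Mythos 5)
Your proposal is correct and follows essentially the same baby-step/giant-step construction as the paper, with the same choice $t = \lceil n^{3/4}\rceil$, the same rectangular product of shape $(sm\times t)$ by $(t\times n)$, and the same cost accounting. The only cosmetic difference is that the paper assembles the final answer via Horner's scheme with a single giant step $\gamma^t$, while you precompute all powers $\gamma^{tj}$ explicitly; both cost $\tilde O(n^{7/4})$ and are dominated by the matrix product.
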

\begin{proof}
(Compare \citealt[Lemma~3]{KalSho98}) As noted above, for $i\le s$,
  $g(\alpha_i) = \alpha_i(\gamma)$, with $\gamma := g(\xbar) \in \K$.
  Let $t := \lceil n^{3/4} \rceil$, $m:=\lceil n/t\rceil$, and rewrite $\alpha_1 , \ldots , \alpha_s$ as 
$$\alpha_i = \sum_{0 \leq j < m} a_{i,j}\xbar^{tj},$$ where the
  $a_{i,j}$'s are polynomials of degree less than $t$. The next step
  is to compute $\gamma_i := \gamma^i$, for $i = 0 , \ldots , t$.
  There are $t$ products in $\K$ to perform, so this amounts to
  $\tilde{O}(n^{7/4})$ operations in $\F$.

  Having $\gamma_i$'s in hand, one can form the matrix
  $\boldsymbol{\Gamma} := \left[ \Gamma_0 ~ \cdots ~ \Gamma_{t-1}
    \right]^T$, where each column $\Gamma_i$ is the coefficient vector
  of $\gamma_i$ (with entries in $\F$); this matrix has $t \in
  O(n^{3/4})$ rows and $n$ columns. We also form
  $$\mat A := \left[{A}_{1,0} \cdots {A}_{1,m-1} \cdots
    {A}_{s,0} \cdots {A}_{s,m-1}\right]^T,$$ where
  ${A}_{i,j}$ is the coefficient vector of $a_{i,j}$. This matrix 
  has $s m \in O(n^{3/4})$ rows and $t \in O(n^{3/4})$ columns.

  Compute $\mat B:=\mat A\, \boldsymbol{\Gamma}$; as per our
  definition of exponents $\omega(\cdot )$, this can be done in
  $O(n^{(3/4)\cdot \omega(4/3)})$ operations in $\F$, and the rows of this matrix
  give all $a_{i,j}(\gamma)$.  The last step to get all
  $\alpha_i(\gamma)$ is to write them as $\alpha_i(\gamma) = \sum_{0
    \leq j < m} a_{i,j}(\gamma) \gamma_t^{j}.$ Using Horner's scheme,
  this takes $O(sm)$ operations in $\K$, which is $\tilde{O}(n^{7/4})$
  operations in $\F$. Since $(3/4)\cdot\omega(4/3) \ge 7/4$,
  the leading exponent in all costs seen so far is
  $(3/4)\cdot\omega(4/3)$.
\end{proof}

\begin{lemma}\label{lem:selfcomp}
  Consider $g_1, \ldots g_r$ in $G=\mathrm{Gal}(\K/\F)$, positive
  integers $(s_1, \ldots s_r)$ and elements $\alpha_{i_1, \dots, i_r}$
  in $\K$, for $i_m=0,\dots,s_m$, $m=1,\dots,r$. If $\prod_{i = 1}^r
  s_i = O(\sqrt{n})$ and $r = O(\log(n))$, we can compute
  $$g_r^{i_r}\cdots g_1^{i_1}(\alpha_{i_1, \dots, i_r})
  \text{~for~} i_m=0,\dots,s_m,\ m=1,\dots,r
  $$  using $\thecost$ operations in $\F$.
\end{lemma}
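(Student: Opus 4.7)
The plan is to proceed by induction on $r$, reducing to Lemma~\ref{lem:modcom}. In the base case $r=1$, given $g_1 \in G$ and $\alpha_0,\dots,\alpha_{s_1}\in\K$ with $s_1=O(\sqrt n)$, I first compute the ``power automorphisms'' $\gamma_{i_1} := g_1^{i_1}(\xbar)\in\K$ for $i_1=0,\dots,s_1$ by a doubling bootstrap: given $\gamma_0,\dots,\gamma_{2^k}$, I apply the automorphism $g_1^{2^k}$ (known via its image $\gamma_{2^k}$) to that whole batch in one invocation of Lemma~\ref{lem:modcom}, producing $\gamma_{2^k},\dots,\gamma_{2^{k+1}}$. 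Since each of the $O(\log n)$ doubling steps processes $O(\sqrt n)$ elements, each costs $\thecost$ and the total stays within budget. Having the $\gamma_{i_1}$'s, I then compute all $\alpha_{i_1}(\gamma_{i_1})=g_1^{i_1}(\alpha_{i_1})$ by adapting the block decomposition used in Lemma~\ref{lem:modcom}: split each $\alpha_{i_1}=\sum_j a_{i_1,j}(\xbar)\xbar^{tj}$ with $t=\lceil n^{3/4}\rceil$, assemble the $(s_1+1)\lceil n/t\rceil=O(n^{3/4})$ low-degree chunks into a matrix of size $O(n^{3/4})\times t$, and multiply by the appropriate block-structured matrix of powers $\gamma_{i_1}^k$, keeping the rectangular product in the optimal $O(n^{3/4})\times O(n^{3/4})\times O(n)$ shape so that its cost matches Lemma~\ref{lem:modcom}.

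For the inductive step I peel off the outer generator, writing
$$g_r^{i_r}\cdots g_1^{i_1}(\alpha_{i_1,\dots,i_r}) = g_r^{i_r}\bigl(\beta_{i_1,\dots,i_r}\bigr),\quad \beta_{i_1,\dots,i_r}:=g_{r-1}^{i_{r-1}}\cdots g_1^{i_1}(\alpha_{i_1,\dots,i_r}).$$
I compute the $\beta$'s by invoking the inductive hypothesis once, with the reduced parameter set $(s_1,\dots,s_{r-1})$ and the enlarged input family indexed also by the bystander index $i_r$; the hypothesis $\prod_{k<r}s_k\le\prod_k s_k=O(\sqrt n)$ still holds. Then, fixing $(i_1,\dots,i_{r-1})$, I apply $g_r^{i_r}$ for all $i_r=0,\dots,s_r$ to the tuple $(\beta_{i_1,\dots,i_{r-1},0},\dots,\beta_{i_1,\dots,i_{r-1},s_r})$ using the base-case machinery, batched across all choices of $(i_1,\dots,i_{r-1})$.

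The main obstacle is the accounting: a naive peel-off that invokes the $(r-1)$-case once for each value of $i_r$ would incur a spurious factor $(s_r+1)$ and exceed the budget. The remedy is to formulate the inductive statement flexibly enough that extra bystander indices on the input $\alpha$'s contribute only to the row-count of the rectangular matrix products, not to a multiplicative overhead; under the standing assumption $\prod s_i=O(\sqrt n)$ the enlarged input size remains compatible with the $O(n^{3/4})\times O(n^{3/4})\times O(n)$ target shape, so one pass of the base-case strategy at each recursion level suffices. Since the recursion has $r=O(\log n)$ levels and each level costs $\thecost$, the polylog accumulation is absorbed by the $\tilde O$-notation and the overall cost is $\thecost$ as claimed.
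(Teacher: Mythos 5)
Your inductive peel-off skeleton is close in spirit to the paper's iteration, but the base case as described does not meet the cost target, and this gap propagates up through the recursion. Concretely, the ``block-structured matrix of powers $\gamma_{i_1}^k$'' cannot have only $O(n^{3/4})$ rows: the block of rows of your chunk matrix indexed by $i_1$ must be multiplied against the powers of $\gamma_{i_1}$, and $\gamma_{i_1}$ differs for each of the $s_1+1 = O(\sqrt{n})$ values of $i_1$. Materializing those powers means computing $\gamma_{i_1}^k \in \K$ for $0\le k< t = O(n^{3/4})$ and every $i_1$, i.e.\ $\Theta(s_1 t)$ products in $\K$, which for $s_1=\Theta(\sqrt n)$ already costs $\tilde{O}(n^{9/4})$ and blows the budget $\thecost < \tilde{O}(n^{1.99})$. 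Batching across $i_1$ cannot rescue this: since $a_{i_1,j}(\gamma_{i_1})$ depends on $i_1$ through the composition point $\gamma_{i_1}$, the computation is not a single rectangular matrix product of shape $O(n^{3/4})\times O(n^{3/4})\times O(n)$ but rather $s_1+1$ separate ones against distinct $\boldsymbol{\Gamma}$'s. The same obstruction recurs in your inductive step, where you apply $g_r^{i_r}$ for $s_r+1$ different values of $i_r$ ``using the base-case machinery.''

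The paper avoids this by never composing with more than one automorphism per round. For each generator $g_{t+1}$ it runs over the bits $m$ of the exponent $i_{t+1}$: in round $m$ it applies the \emph{single} automorphism $g_{t+1}^{2^m}$ (obtained incrementally by modular composition) to the $O(\sqrt n)$ partial results whose index $i_{t+1}$ has bit $m$ set, which is exactly the setting of Lemma~\ref{lem:modcom}. Summed over all generators this gives $O(\log(s_1\cdots s_r)+r)=O(\log n)$ rounds, each costing $\thecost$. Your doubling bootstrap for the $\gamma_{i_1}$'s is a sound device, but it computes data the algorithm never uses; the right move is to accumulate the exponent $i_1$ bit by bit directly on the $\alpha_{i_1}$'s, not to materialize all $\gamma_{i_1}$'s and then attempt $s_1+1$ modular compositions with distinct composition points in one shot.
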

\begin{proof}
  Define $\mathcal I = \{(i_1,\dots,i_r) \mid i_m=0,\dots,s_m
  \text{~for~} m=1,\dots,r\}$. For $(i_1,\dots,i_r)$ in $\mathcal I$
  and non-negative integers $\ell_1,\dots,\ell_r$, define
  $$\alpha_{i_1, \dots, i_r}^{(\ell_1,\dots,\ell_r)} 
  =g_r^{\ell_r} \cdots g_1^{\ell_1}(\alpha_{i_1, \dots, i_r}).$$  
  Assume then that
  for some $t$ in $\{0,\dots,r-1\}$, we know 
  $$S_t=(\alpha_{i_1, \dots,i_r}^{(i_1, \dots, i_{t},0, \dots, 0)} \ \mid \ (i_1,\dots,i_r)\in \mathcal I);$$ we show how to compute 
  $$S_{t+1}=(\alpha_{i_1, \dots,i_r}^{(i_1, \dots, i_{t+1},0, \dots,
    0)} \ \mid \ (i_1,\dots,i_r)\in \mathcal I).$$ Since our input is
  $S_0$, it will be enough to go through this process for all values
  of $t$ to obtain the output $S_r$ of the algorithm.
  
  For a given index $t$, and for $m \ge 0$
  define further 
  $$S_{t,m}=(\alpha_{i_1, \dots,i_r}^{(i_1, \dots, i_{t},i_{t+1} \bmod
    2^m, 0,\dots, 0)} \ \mid \ (i_1,\dots,i_r)\in \mathcal I);$$ in
  particular, $S_{t,0} = S_t$ and $S_{t,\lfloor
    \log_2(s_{t+1})\rfloor+1} = S_{t+1}$.  Hence, given $S_{t,m}$, it
  is enough to show how to compute $S_{t,m+1}$, for indices
  $m=0,\dots,\lfloor \log_2(s_{t+1})\rfloor$.  This is done by writing
  $$S_{t,m+1}=
  (\beta_{i_1, \dots,i_r,t,m} \ \mid \ (i_1,\dots,i_r)\in \mathcal I),$$
  with
  $$\beta_{i_1, \dots,i_r,t,m}
  =\begin{cases}
  \alpha_{i_1, \dots,i_r}^{(i_1, \dots, i_{t},i_{t+1} \bmod 2^m,0, \dots, 0)}  \text{~if~} 
  i_{t+1} \bmod 2^{m+1} = i_{t+1} \bmod 2^m \\[2mm]
  g_{t+1}^{2^{m}}(\alpha_{i_1, \dots,i_r}^{(i_1, \dots, i_{t},i_{t+1} \bmod 2^m,0, \dots, 0)}) 
  \text{~otherwise.}
  \end{cases}$$
  The automorphisms $g_{t+1}^{2^m}$ can be computed iteratively by modular
  composition; the bottleneck is the application of  $g_{t+1}^{2^m}$
  to a subset of $S_{t,m}$. Using Lemma \ref{lem:modcom}, since 
  $S_{t,m}$ has $O(\sqrt n)$ elements, this takes $\thecost$ 
  operations in $\F$.
  
  For a given index $t$, this is repeated $\lfloor
  \log_2(s_{t+1})\rfloor \le \log_2(s_{t+1}) + 1$ times. Adding up for
  all indices $t$, this amounts to $O(\log (s_1 \cdots s_r)+r)$
  repetitions, which is $O(\log(n))$ by assumption; the conclusion
  follows.
\end{proof}

We now present dual versions of the previous two lemmas (note that
\cite{KalSho98} also have such a discussion). Seen as an $\F$-linear
map, the operator $g:\alpha \mapsto g(\alpha)$ admits a transpose,
which maps an $\F$-linear form $\ell:\K\to\F$ to the $\F$-linear form
$\ell \circ g: \alpha \mapsto \ell(g(\alpha))$.  The {\em
  transposition principle}~\citep{KaKiBs88,CaKaYa89} implies that if a
linear map $\F^N \to \F^M$ can be computed in time $T$, its transpose
can be computed in time $T+O(N+M)$. In particular, given $s$ linear
forms $\ell_1,\dots,\ell_s$ and $g$ in $G$, transposing
Lemma~\ref{lem:modcom} shows that we can compute $\ell_1 \circ
g,\dots,\ell_s \circ g$ in time $\osumcosttilde$. The following lemma
sketches the construction.

\begin{lemma}
  \label{lem:modcomT}
  Given $\F$-linear forms $\ell_1,\dots,\ell_s:\K\to \F$ and $g$ in $G =
  \mathrm{Gal}(\K/\F)$, with $s = O(\sqrt{n})$, we can compute
  $\ell_1\circ g,\dots,\ell_s \circ g$ 
 using $\thecost$ operations in $\F$.
\end{lemma}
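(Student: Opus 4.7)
The plan is to obtain this statement as a direct application of the transposition principle to the forward algorithm of Lemma~\ref{lem:modcom}. Once the element $\gamma = g(\xbar) \in \K$ and its precomputed powers $\gamma^0,\dots,\gamma^t$ are fixed, the assignment $(\alpha_1,\dots,\alpha_s) \mapsto (g(\alpha_1),\dots,g(\alpha_s))$ is an $\F$-linear map $\F^{sn}\to\F^{sn}$ (the linear forms $\ell_i \circ g$ are precisely the images of $\ell_1,\dots,\ell_s$ under the transpose of this map). So the first thing I would do is verify that every step of Lemma~\ref{lem:modcom} carried out after the precomputation of $\gamma^0,\dots,\gamma^t$ is indeed $\F$-linear in the coefficients of the $\alpha_i$'s: the block decomposition $\alpha_i = \sum_j a_{i,j}\xbar^{tj}$ is a mere rearrangement of coefficients; forming $\mat A$ is linear in the $a_{i,j}$; the product $\mat B = \mat A\,\boldsymbol{\Gamma}$ is linear in $\mat A$ (since $\boldsymbol{\Gamma}$ depends only on $g$); and the final Horner recombination $\alpha_i(\gamma) = \sum_{j} a_{i,j}(\gamma)\gamma_t^{j}$ is linear in the $a_{i,j}(\gamma)$'s.

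Next I would run the transposition of each of these stages in reverse order. The precomputation of $\gamma_0,\dots,\gamma_t$ is unchanged and still costs $\tilde O(n^{7/4})$. The transpose of the Horner step takes $s$ input forms on $\K$ and, using the $\gamma_t^j$'s, produces $sm$ forms, all for $\tilde O(n^{7/4})$ operations. The transpose of $\mat B = \mat A\,\boldsymbol{\Gamma}$ is a rectangular matrix product of the same shape (up to transposition), computable in $O(n^{(3/4)\omega(4/3)})$ operations in $\F$; this relies on the standard fact that rectangular matrix multiplication has the same complexity as its transposed variant, since one can obtain $\boldsymbol{\Gamma}^T \mat U^T$ from an algorithm for $\mat U\,\boldsymbol{\Gamma}$ by the transposition principle applied to each fixed factor. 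The transpose of the block decomposition is again a rearrangement, contributing $O(sn)$ operations.

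Adding these costs gives $\tilde O(n^{(3/4)\omega(4/3)}) + O(sn) = \thecost$, since $sn = O(n^{3/2})$ and $(3/4)\omega(4/3) \ge 7/4 > 3/2$. The main conceptual obstacle is really only the bookkeeping needed to check that the cost of the transposed rectangular matrix product matches that of the original one; none of the other steps lose anything under transposition beyond $O(sn)$ terms for input/output handling. With this check in place the lemma follows.
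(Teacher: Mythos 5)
Your proposal is correct and matches the paper's reasoning: the paper itself invokes the transposition principle in the text immediately preceding the lemma to establish the cost bound, and its ``proof'' is just an explicit unrolling of the transposed algorithm (transposed Horner becomes the transposed multiplications $L_{i,j}(\alpha) = \ell_i(\gamma^{tj}\alpha)$, and the transposed rectangular product is the $(sm\times n)\times(n\times t)$ multiplication). You argue at the level of the principle without writing out the dual steps, but the verification you give — that every post-precomputation stage of Lemma~\ref{lem:modcom} is $\F$-linear in the $\alpha_i$ coefficients, and that each transposed stage has matching cost — is exactly what the paper's sketch makes concrete.
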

\begin{proof}
  Given $\ell_i$ by its values on the power basis $1,\xbar,\dots,\xbar^{n-1}$, $\ell_i \circ g$ is represented by its values at
  $1,\gamma,\dots,\gamma^{n-1}$, with $\gamma := g(\xbar)$. 

  Let $t,m$ and $\gamma_0,\dots,\gamma_t$ be as in the proof of
  Lemma~\ref{lem:modcom}. Compute the ``giant steps''
  $\gamma_t^j = \gamma^{tj}$, $j=0,\dots,m-1$ and for $i=1,\dots,s$
  and $j=0,\dots,m-1$, deduce the linear forms $L_{i,j}$ defined by
  $L_{i,j}(\alpha) := \ell_i(\gamma^{tj}\alpha)$ for all $\alpha$ in
  $\K$. Each of them can be obtained by a {\em transposed
    multiplication} in time $\tilde{O}(n)$~\citep[Section~4.1]{Shoup},
  so that the total cost thus far is $\tilde{O}(n^{7/4})$.

  Finally, multiply the $(sm \times n)$ matrix with entries the
  coefficients of all $L_{i,j}$ (as rows) by the $(n \times t)$ matrix with
  entries the coefficients of $\gamma_0,\dots,\gamma_{t-1}$ (as columns) to
  obtain all values $\ell_i(\gamma^j)$, for $i=1,\dots,s$ an
  $j=0,\dots,n-1$.  This can be accomplished with
  $O(n^{(3/4)\cdot\omega(4/3)})$ operations in~$\F$.
\end{proof}

From this, we deduce the transposed version of Lemma~\ref{lem:selfcomp},
whose proof follows the same pattern.

\begin{lemma}
  \label{lem:transmodcomp}
  Consider $g_1, \ldots, g_r$ in $G=\mathrm{Gal}(\K/\F)$, positive
  integers $(s_1, \ldots, s_r)$ and $\F$-linear forms $\ell_{i_1,
    \dots, i_r}$, for $i_m=0,\dots,s_m$, $m=1,\dots,r$. If $\prod_{i =
    1}^r s_i = O(\sqrt{n})$ and $r = O(\log(n))$, we can compute
  $$\ell_{i_1, \dots, i_r} \circ g_r^{i_r}\cdots g_1^{i_1}
  \text{~for~} i_m=0,\dots,s_m,\ m=1,\dots,r
  $$  using $\thecost$ operations in $\F$.
\end{lemma}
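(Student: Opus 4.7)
The plan is to replay the proof of Lemma \ref{lem:selfcomp}, substituting Lemma \ref{lem:modcomT} for Lemma \ref{lem:modcom} at the inner step. Set $\mathcal I = \{(i_1,\dots,i_r)\mid 0\le i_m\le s_m\}$. For $t=0,\dots,r$, define the family of linear forms
$$T_t=\bigl(\,\ell_{i_1,\dots,i_r} \circ g_t^{i_t}\cdots g_1^{i_1}\;\bigm|\;(i_1,\dots,i_r)\in\mathcal I\,\bigr),$$
so that $T_0$ is the input family (with the convention that empty composition is the identity) and $T_r$ is the desired output. It suffices to show how to pass from $T_t$ to $T_{t+1}$ for each $t$ in $\{0,\dots,r-1\}$, that is, how to post-compose with the new factor $g_{t+1}^{i_{t+1}}$.

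For this, I would use the same binary-doubling trick on $i_{t+1}$ as in Lemma \ref{lem:selfcomp}: define
$$T_{t,m}=\bigl(\,\ell_{i_1,\dots,i_r}\circ g_{t+1}^{i_{t+1}\bmod 2^m}\,g_t^{i_t}\cdots g_1^{i_1}\;\bigm|\;(i_1,\dots,i_r)\in\mathcal I\,\bigr),$$
so that $T_{t,0}=T_t$ and $T_{t,\lfloor\log_2 s_{t+1}\rfloor+1}=T_{t+1}$. Passing from $T_{t,m}$ to $T_{t,m+1}$ requires post-composing $g_{t+1}^{2^m}$ with a subset of $T_{t,m}$, containing at most $|\mathcal I|=\prod_i(s_i+1)=O(\sqrt n)$ linear forms. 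This is exactly the setting of Lemma \ref{lem:modcomT}, giving cost $\thecost$ per doubling step. The required automorphisms $g_{t+1}^{2^m}$ are obtained by iterated modular composition at cost $\tilde O(n^{(\omega+1)/2})$, which is absorbed.

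Summing over all phases, the total number of doubling steps is $\sum_{t=0}^{r-1}(\lfloor\log_2 s_{t+1}\rfloor+1)=O(\log(s_1\cdots s_r)+r)=O(\log n)$ by the hypotheses on the $s_i$ and on $r$. Each step costs $\thecost$, so the overall cost is $\thecost$ as claimed (the logarithmic factor is absorbed by the soft-O). The only point requiring care, and the closest thing to an ``obstacle,'' is to verify that the recursion produces the right composition order: at phase $t$ we post-compose with $g_{t+1}^{i_{t+1}}$, so $g_1$ ends up innermost and $g_r$ outermost, matching the expression $\ell_{i_1,\dots,i_r}\circ g_r^{i_r}\cdots g_1^{i_1}$ in the statement. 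Beyond this bookkeeping, the argument is a direct transcription of the proof of Lemma \ref{lem:selfcomp}.
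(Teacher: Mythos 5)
There is a genuine gap in the direction of iteration, and it is precisely the point the paper's proof handles by \emph{reversing} the order relative to Lemma~\ref{lem:selfcomp}. You define $T_t = \ell_{i_1,\dots,i_r}\circ g_t^{i_t}\cdots g_1^{i_1}$ and propose to go from $T_t$ to $T_{t+1}=\ell_{i_1,\dots,i_r}\circ g_{t+1}^{i_{t+1}} g_t^{i_t}\cdots g_1^{i_1}$, but this requires \emph{inserting} $g_{t+1}^{i_{t+1}}$ between $\ell$ and the already-accumulated automorphism $A=g_t^{i_t}\cdots g_1^{i_1}$. That is not a right-composition: if $\ell' = \ell\circ A$ is what you currently hold, then $\ell'\circ g_{t+1}^{2^m} = \ell\circ A\circ g_{t+1}^{2^m}$, whereas what you need is $\ell\circ g_{t+1}^{2^m}\circ A$. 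These agree only if $g_{t+1}^{2^m}$ commutes with $A$, which fails for the non-abelian (e.g.\ metacyclic) groups the lemma must cover. Conversely, if you really do apply Lemma~\ref{lem:modcomT} at each doubling step (which right-composes by a single fixed automorphism), then starting from $\ell$ you end up with $\ell\circ g_1^{i_1}\circ g_2^{i_2}\circ\cdots\circ g_r^{i_r}$, the composition in reversed order, not the stated $\ell\circ g_r^{i_r}\cdots g_1^{i_1}$. So the ``only bookkeeping'' remark at the end of your argument is exactly where the difficulty lies.

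The paper fixes this by running the outer loop from $t=r$ down to $t=0$, maintaining $L_{t+1} = \bigl(\ell_{i_1,\dots,i_r}\circ g_r^{i_r}\cdots g_{t+1}^{i_{t+1}}\bigr)$ and passing to $L_t = L_{t+1}\circ g_t^{i_t}$. Here the new factor $g_t^{i_t}$ is appended on the innermost (right) side, which is genuinely a right-composition of a known linear form with a fixed automorphism, so Lemma~\ref{lem:modcomT} applies directly at each doubling step. With that one change — iterate $t$ from $r$ down to $0$ instead of from $0$ up to $r$ — the rest of your argument (the binary-doubling on $i_t$, the count of $O(\log n)$ doubling steps, and the absorption of the $\tilde O(n^{(\omega+1)/2})$ modular compositions) goes through and matches the paper's proof.
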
 
\begin{proof}
  We proceed as in Lemma~\ref{lem:selfcomp}, reversing the order of
  the steps. Using the same index set $\mathcal I$ as before, define,
  for $(i_1,\dots,i_r)$ in $\mathcal I$ and non-negative integers
  $k_1,\dots,k_r$
  $$\ell_{i_1,\dots,i_r}^{(k_1,\dots,k_r)} =\ell_{i_1,\dots,i_r} \circ g_r^{k_r}\cdots g_1^{k_1}.$$
  For $t=r,\dots,0$, assuming that
  we know 
  $$L_{t+1} = (\ell_{i_1, \dots,i_r}^{(0, \dots, 0,i_{t+1},\dots,i_r)} \ \mid
  \ (i_1,\dots,i_r)\in \mathcal I),$$ we compute 
  $$L_{t}=(\ell_{i_1, \dots,i_r}^{(0, \dots, 0,i_{t},i_{t+1},\dots,i_r)}
  \ \mid \ (i_1,\dots,i_r)\in \mathcal I).$$
  This time, for $m \ge 0$, we set
  $$L_{t+1,m} = (\ell_{i_1, \dots,i_r}^{(0, \dots, 0,\lfloor i_{t}
    \rfloor_m,i_{t+1},\dots,i_r)} \ \mid \ (i_1,\dots,i_r)\in \mathcal
  I),$$ where for a non-negative integer $x$, $\lfloor x \rfloor_m = x
  - (x \bmod (2^m-1))$ is obtained by setting to zero the coefficients
  of $1,2,\dots,2^{m-1}$ in the base-two expansion of $x$.

  Starting from $L_{t+1} = L_{t, \lceil \log_2(s_t) \rceil +1}$, we
  compute all $L_{t+1,m}$ for $m= \lceil \log_2(s_t) \rceil,\dots,0$,
  since $L_{t+1,0} = L_{t}$. This is done essentially as in
  Lemma~\ref{lem:selfcomp}, but using Lemma~\ref{lem:modcomT} this
  time, in order to do right-composition by $g_t^{2^m}$.
  The cost analysis is as in Lemma~\ref{lem:selfcomp}.
\end{proof}


\subsection{Computing the orbit sum projection for polycyclic groups}
\label{ssec:proj_abelian}

Our main algorithm in this section applies to a family of groups known
as {\em polycyclic}; see~\cite[Chapter 8]{HoEiOb05} for more details
on such groups.

Our group $G$ is called polycyclic if it has a normal series
$$G = G_{r} \unrhd G_{r-1} \unrhd \cdots \unrhd G_1 \unrhd G_{0} =
1,$$ where $G_{j}/G_{j-1}$ is cyclic; without loss of generality, we
assume that $G_{j-1 } \ne G_{j}$ holds for all $j$, so that $r$ is
$O(\log(n))$, with $n=|G|$. Finitely generated nilpotent or abelian
groups are polycyclic. In general any finite solvable group is
polycyclic; our key families of examples in the next section (abelian
and metacyclic groups) thus fit into this category.

If $G$ is polycyclic then, up to renumbering, its 
elements can be written as
\[\label{eq:polycyclicgrp}
g_r^{i_r} \cdots g_1^{i_1}, \text{~with~} 0 \leq i_j < e_j
\text{~for~} 1 \leq j \leq r,\
\]
where $G_{j}/G_{j-1} = \langle g_{j}G_{j-1} \rangle$ and $e_j=\vert
G_{j}/G_{j-1}\vert$. Elements of $\K[G]$, or $\F[G]$ are written as
polynomials $\sum_{i_1,\dots,i_r} c_{i_1,\dots,i_r} {g_r}^{i_r} \cdots
{g_1}^{i_1}$, with $0\le i_j < e_j$ for all~$j$, and coefficients 
in either $\K$ or $\F$.


\begin{proposition}\label{prop:polycyclic}
  Suppose that $G$ is polycyclic, with notation as above. For $\alpha$ in
  $\K$ and $\ell:\K\to\F$, $s_{\alpha,\ell} \in \F[G]$, as defined
  in~\eqref{def:s_alpha_ell}, can be computed using $\osumcosttilde$
  operations in $\F$.
\end{proposition}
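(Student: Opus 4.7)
The task reduces to computing $\ell(g(\alpha))$ for every $g \in G$, which becomes the coefficient of $g$ in $s_{\alpha,\ell}$. I would follow a baby-steps/giant-steps pattern adapted to the polycyclic structure: factor each $g \in G$ as $g = h_2 h_1$ with $h_1, h_2$ ranging over sets of size $O(\sqrt n)$, precompute the baby steps $A_{h_1} := h_1(\alpha) \in \K$ via Lemma~\ref{lem:selfcomp}, the giant-step linear forms $L_{h_2} := \ell \circ h_2 : \K \to \F$ via Lemma~\ref{lem:transmodcomp}, and then read off each value from the identity $\ell(g(\alpha)) = L_{h_2}(A_{h_1})$ via a single rectangular matrix product.

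To define the split, use the unique normal form $g = g_r^{i_r} \cdots g_1^{i_1}$ with $0 \le i_j < e_j$ and pick the unique index $k$ such that $n_1 := \prod_{j \le k} e_j \le \sqrt n < n_1 e_{k+1}$. Set $s := \lceil \sqrt n / n_1 \rceil$, $t := \lceil e_{k+1}/s \rceil$, and split only the pivot exponent as $i_{k+1} = b s + a$ with $0 \le a < s$ and $0 \le b < t$. This yields the factorization
\[
g \;=\; \underbrace{g_r^{i_r} \cdots g_{k+2}^{i_{k+2}} \, g_{k+1}^{bs}}_{=:\, h_2} \;\cdot\; \underbrace{g_{k+1}^{a} \, g_k^{i_k} \cdots g_1^{i_1}}_{=:\, h_1}.
\]
A short bookkeeping check using $\prod_{j > k+1} e_j = n/(n_1 e_{k+1}) < \sqrt n$ shows $|H_1| = s\cdot n_1$ and $|H_2| = t\cdot \prod_{j > k+1} e_j$ are both $O(\sqrt n)$; spurious pairs with $bs + a \ge e_{k+1}$ are simply discarded.

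For the baby steps, invoke Lemma~\ref{lem:selfcomp} on the generators $g_1, \dots, g_{k+1}$ with bounds $e_1, \dots, e_k, s$ and all seeds equal to $\alpha$; the hypotheses ($\le r = O(\log n)$ generators, product of bounds $O(\sqrt n)$) are met, so the cost is $\osumcosttilde$. For the giant steps, first compute $g_{k+1}^s$ by $O(\log n)$ iterated modular compositions (well within budget), then invoke Lemma~\ref{lem:transmodcomp} on the generators $g_{k+1}^s, g_{k+2}, \dots, g_r$ with bounds $t, e_{k+2}, \dots, e_r$ and seed $\ell$, for the same cost. Finally, form the $|H_2| \times n$ matrix whose rows are the coefficient vectors of the $L_{h_2}$'s on the power basis, and the $n \times |H_1|$ matrix whose columns are the coefficient vectors of the $A_{h_1}$'s; by $\F$-linearity the product yields all values $\ell(g(\alpha))$ in one pass. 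The product has shape $(\sqrt n \times n) \times (n \times \sqrt n)$, and a standard block decomposition into $\sqrt n$ square $(\sqrt n \times \sqrt n)$ subproducts gives a cost of $O(n^{(\omega+1)/2})$, which is within $\osumcosttilde$ under current bounds on $\omega$ and $\omega(4/3)$.

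The main conceptual obstacle is ensuring that the factorization $g = h_2 h_1$ is valid in the general, not-necessarily-abelian polycyclic case. A naive attempt to split every exponent $i_j$ at once would force us to reorder distinct generators $g_i, g_j$, which fails in the non-abelian setting. Splitting only the single pivot exponent $i_{k+1}$ sidesteps the issue entirely, because $g_{k+1}^{bs+a} = g_{k+1}^{bs} g_{k+1}^a$ is an identity internal to the cyclic subgroup $\langle g_{k+1}\rangle$ and requires no commutation relations across generators. Once this is set up correctly, the remainder is routine assembly of the already-established Lemmas~\ref{lem:selfcomp} and~\ref{lem:transmodcomp} together with one rectangular matrix multiplication.
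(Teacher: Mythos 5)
Your proposal is correct and follows essentially the same route as the paper: the pivot index $k+1$ in your notation corresponds to the paper's $z$, the split of the single pivot exponent $i_{k+1}=bs+a$ is exactly the paper's $i_z = s_z j_z + i_z$, and the invocations of Lemmas~\ref{lem:selfcomp} and~\ref{lem:transmodcomp} followed by one $(\sqrt n \times n)\times(n\times\sqrt n)$ matrix product mirror the paper's Steps 1--4 (including the observation that splitting only the pivot exponent avoids any illegal reordering of non-commuting generators). The only cosmetic difference is that you bound the final matrix product by $O(n^{(\omega+1)/2})$ via block decomposition, whereas the paper writes $O(n^{(1/2)\cdot\omega(2)})$; both are dominated by $\osumcosttilde$.
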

\begin{proof}
Our goal is to compute
\begin{equation} \label{eq:polycyclic}
  \ell (g_r^{i_r}  \ldots g_1^{i_1}(\alpha)),
\end{equation}
for all indices such that $0 \leq i_j < e_j$ holds for $1 \leq j \leq
r$; here, $\ell$ is an $\F$-linear projection $\K\to \F$.

Our construction is inspired by that sketched in the cyclic case.
Define $z$ to be the unique index in $\{1,\dots,r\}$ such that
$e_1\cdots e_{z-1} < \sqrt{n}$ and $e_1 \cdots e_{z} \geq
\sqrt{n}.$ Then, all elements in \eqref{eq:polycyclic} can be computed
with the following steps, the sum of whose costs proves the
proposition.

\smallskip\noindent \textbf{Step 1.} Apply Lemma \ref{lem:selfcomp},
with $\alpha_{i_1,\dots,i_r} = \alpha$ for all $i_1,\dots,i_r$, to get
$$ G_{i_z,\dots,i_1}=g_z^{i_z} \cdots
g_1^{i_1}(\alpha),$$ for all indices $i_1,\dots,i_z$ such that $0\leq
i_m < e_m$ holds for $m=1,\dots,z-1$ and $0\leq i_z <
\lceil{\sqrt{n}}/(e_1 \cdots e_{z-1})\rceil$.  This amounts to taking
$s_1=e_1,\dots,s_{z-1}=e_{z-1}$, $s_z=\lceil {\sqrt{n}}/(e_1 \cdots
e_{z-1})\rceil$ and $s_m = 1$ for $m > z$ in the lemma.  For the lemma to
apply, we have to check that the product of these indices
$s_1,\dots,s_r$ is $O(\sqrt n)$. 
Indeed, this product is at most 
\begin{align*}
e_1\cdots e_{z-1} \left ( \frac{\sqrt n}{e_1\cdots e_{z-1}} + 1\right )
& \le \sqrt n + e_1\cdots e_{z-1} \le 2 \sqrt n.
\end{align*}
Hence, the lemma applies, and the cost of this step is $\osumcosttilde$.

\smallskip\noindent\textbf{Step 2.} Compute $G_z =g_z^{s_z}$, for $s_z$
as above. The cost is that of $O(\log(n))$ modular compositions, which
is negligible compared to the cost of the previous step.

\smallskip\noindent\textbf{Step 3.} Use Lemma \ref{lem:transmodcomp}
with $\ell_{i_r,\dots,i_1} = \ell$ for all $i_1,\dots,i_r$, to compute 
\begin{align*}
L_{j_r,\dots,j_z} &= \ell \circ (g_{r}^{j_{r}}\cdots g_{z+1}^{j_{z+1}}G_z^{j_z})\\ 
&= \ell \circ (g_{r}^{j_{r}}\cdots g_{z+1}^{j_{z+1}}g_{z}^{s_z j_z}),
\end{align*}
for all indices $0 \leq j_{z} < \lceil {e_{z}}/{s_z}\rceil$ and $0
\leq j_m < e_m$ for $m > z$. This amounts to using the lemma with
indices $s'_1=\cdots=s'_{z-1}=1$, $s'_z = \lceil {e_{z}}/{s_z}\rceil$
and $s'_m = e_m$ for $m > z$. Again, we have to verify that $s'_1
\cdots s'_r$ is $O(\sqrt n)$.  Indeed, we have
\begin{align*}
s'_1 \cdots s'_r  = \left \lceil \frac{e_{z}}{s_z}\right \rceil e_{z+1} \cdots e_r
&\le \left (\frac{e_{z}}{s_z} +1 \right) e_{z+1} \cdots e_r\\
&\le \frac{e_{z} \cdots e_r}{s_z} + e_{z+1} \cdots e_r.
\end{align*}
By definition, we have $s_z \ge \sqrt{n}/(e_1\cdots e_{z-1})$, so
$e_z \cdots e_r/s_z \le e_1 \cdots e_r /\sqrt{n} =\sqrt{n}$.
Because we assume $e_1 \cdots e_{z} \geq \sqrt{n}$, the second term is
also at most $\sqrt{n}$, so the product $s'_1 \cdots s'_r$ is at most
$2 \sqrt{n}$. Hence, Lemma \ref{lem:transmodcomp} applies, and computes
all $L_{j_r,\dots,j_z}$ using $\osumcosttilde$ operations
in $\F$.

\smallskip\noindent\textbf{Step 4.} Multiply the matrix with rows the
coefficients of all $L_{j_r,\dots,j_z}$ by the matrix whose columns
are the coefficients of all $G_{i_z,\dots,i_1}$. This yields the
values
$$\ell( 
g_{r}^{j_{r}}\cdots g_{z+1}^{j_{z+1}}g_{z}^{s_z j_z+i_z} g_{z-1}^{i_{z-1}}\cdots
g_1^{i_1}(\alpha)),$$
for indices as follows:
\begin{itemize}
\item[$\bullet$] $0 \le i_m < e_m$ for $m=0,\dots,z-1$;
\item[$\bullet$] $0 \le i_z < s_z$ and $0 \le j_z< \lceil  e_z/s_z\rceil$;
\item[$\bullet$] $0 \le j_m < e_m$ for $m=z+1,\dots,r$.
\end{itemize}
This shows that we obtain all required values. We compute this product
in $O(n^{(1/2)\cdot\omega(2)})$ operations in $\F$, which is in
$\osumcost$.
\end{proof}

 \section{Arithmetic in the Group Algebra}
\label{sec:invertibility}

In this section we consider the problems of invertibility testing and
division in $\F[G]$: given elements $\beta,\eta$ in $\F[G]$, for a
field $\F$ and a group $G$, determine whether $\beta$ is a unit in
$\F[G]$, and if so, compute $\beta^{-1}\eta$. We focus on two
particular families of polycyclic groups, namely abelian and
metacyclic groups $G$; as well as being necessary in our application
to normal bases, we believe these problems are of independent
interest.

Since we are in characteristic zero, Wedderburn's theorem implies the
existence of an $\F$-algebra isomorphism (which we will refer to as a
Fourier Transform)
\[
  \F[G] \to M_{d_1}(D_1) \times \dots \times M_{d_r}(D_r),
\]
where all $D_i$'s are division algebras over $\F$. If we were working
over $\F=\C$, all $D_i$'s would simply be $\C$ itself.  A natural
solution to test the invertibility of $\beta \in \F[G]$ would then be
to compute its Fourier transform and test whether all its components
$\beta_1 \in M_{d_1}(\C),\dots,\beta_r \in M_{d_r}(\C)$ are
invertible. This boils down to linear algebra over $\C$, and takes
$O(d_1^\omega + \cdots + d_r^\omega)$ operations.  Since $d_1^2 +
\cdots + d_r^2 = n$, with $n=|G|$, this is $O(n^{\omega/2})$
operations in $\C$.

However, we do not wish to make such a strong assumption as $\F=\C$. Since
we measure the cost of our algorithms in $\F$-operations, the direct
approach that embeds $\F[G]$ into $\C[G]$ does not make it possible to
obtain a subquadratic cost in general. If, for instance, $\F=\Q$ and $G$ is
cyclic of order $n=2^k$, computing the Fourier Transform of $\beta$
requires we work in a degree $n/2$ extension of $\Q$, implying a quadratic
runtime.

In this section, we give algorithms for the problems of invertibility
testing and division for the two particular families of polycyclic
groups mentioned so far, namely abelian and metacyclic. For the former,
starting from a suitable presentation of $G$, we give a softly
linear-time algorithm to find an isomorphic image of $\beta \in \F[G]$
in a product of $\F$-algebras of the form $\F[z]/\langle
P_i(z)\rangle$, for certain polynomials $P_i \in \F[z]$ (recovering
$\beta$ from its image is softly-linear time as well). Not only does
this allow us to test whether $\beta$ is invertible, this also makes
it possible to find its inverse in $\F[G]$ (or to compute products in
$\F[G]$) in softly-linear time (we are not aware of previous results
of this kind).

For metacyclic groups, we rely on the block-Hankel structure of the
matrix of multiplication by $\beta$. Through structured linear algebra
algorithms, this allows us to solve both problems (invertibility and
division) in subquadratic (albeit not softly-linear time) time.


\subsection{Abelian groups}

Because an abelian group is a product of cyclic groups, the group
algebra $\F[G]$ of such a group is the tensor product of cyclic
algebras. Using this property, given an element $\beta$ in $\F[G]$,
our goal in this section is to determine whether $\beta$ is a unit,
and if so to compute expressions such as $\beta^{-1} \eta$, for
$\eta$ in $\F[G]$.

The previous property implies that $\F[G]$ admits a description of the
form $\F[x_1,\dots,x_t]/\langle x_1^{n_1}-1,\dots,x_t^{n_t}-1\rangle$,
for some integers $n_1,\dots,n_t$. The complexity of arithmetic
operations in an $\F$-algebra such as $\A:=\F[x_1,\dots,x_t]/\langle
P_1(x_1),\dots,P_t(x_t)\rangle$ is difficult to pin down precisely. For
general $P_i$'s, the cost of multiplication in $\A$ is known to be
$O(\dim(\A)^{1+\varepsilon})$, for any $\varepsilon >
0$~\citep[Theorem~2]{LiMoSc09}. From this it may be possible to deduce
similar upper bounds on the complexity of invertibility test or division,
following~\citep{DaMMMScXi06}, but this seems non-trivial.

Instead, we give an algorithm with softly linear runtime, that uses
the factorization properties of cyclotomic polynomials and Chinese
remaindering techniques to transform our problem into that of
invertibility test or division in algebras of the form $\F[z]/\langle
P_i(z) \rangle$, for various polynomials $P_i$.  \cite{Pol94} also
discusses the factors of algebras such as $\F[x_1,\dots,x_t]/\langle
x_1^{n_1}-1,\dots,x_t^{n_t}-1\rangle$, but the resulting algorithms
are different (and the cost of the \citeauthor{Pol94}'s
\citeyearpar{Pol94} algorithm is only known to be polynomial in
$n=|G|$).

\smallskip

\noindent{\bf Tensor product of two cyclotomic rings: coprime orders.}
The following proposition will be the key to foregoing multivariate
polynomials, and replacing them by univariate ones.  Let $m,m'$ be two
coprime integers and define
$$\mathbbm{h}:=\F[x,x']/\langle \Phi_{m}(x), \Phi_{m'}(x')\rangle,$$
where for $i \ge 0$, $\Phi_i$ is the cyclotomic polynomial of order
$i$. In what follows, $\varphi$ is Euler's totient function, so that
$\varphi(i) = \deg(\Phi_i)$ for all~$i$.
\begin{lemma}
  There exists an $\F$-algebra isomorphism $\gamma: \mathbbm{h} \to
  \F[z]/\langle\Phi_{mm'}(z)\rangle$ given by $xx' \mapsto z$.  Given
  $\Phi_m$ and $\Phi_{m'}$, $\Phi_{mm'}$ can be computed in time
  $\tilde{O}(\varphi(mm'))$; given these polynomials, one can
  apply $\gamma$ and its inverse to any input using
  $\tilde{O}(\varphi(mm'))$ operations in~$\F$.
\end{lemma}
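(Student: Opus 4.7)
The plan is to establish the isomorphism structurally and then handle each of the two complexity claims using composed products together with CRT-guided coefficient permutations. First consider the $\F$-algebra map $\tilde\gamma\colon \F[z] \to \mathbbm{h}$ sending $z \mapsto xx'$. The key algebraic fact to check is $\Phi_{mm'}(xx') = 0$ in $\mathbbm{h}$. The cleanest way is to extend scalars to an algebraic closure $\overline{\F}$: since $\Phi_m$ and $\Phi_{m'}$ are separable,
$$
\mathbbm{h} \otimes_\F \overline{\F} \;\cong\; \prod_{(\zeta,\zeta')} \overline{\F},
$$
the product running over pairs of primitive $m$-th and $m'$-th roots of unity, with $x\mapsto \zeta$ and $x'\mapsto \zeta'$ in the corresponding factor. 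The element $xx'$ then has coordinate $\zeta\zeta'$ at $(\zeta,\zeta')$, and because $\gcd(m,m') = 1$ each such $\zeta\zeta'$ is a primitive $mm'$-th root of unity. So $\Phi_{mm'}(xx') = 0$ coordinate-wise, hence in $\mathbbm{h}$, and $\tilde\gamma$ descends to a map $\F[z]/\langle\Phi_{mm'}\rangle \to \mathbbm{h}$; the desired $\gamma$ will be its inverse.

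To check this descended map is an isomorphism, note both sides have $\F$-dimension $\varphi(m)\varphi(m') = \varphi(mm')$, so it suffices to prove injectivity. The map $(\zeta,\zeta')\mapsto \zeta\zeta'$ on pairs of primitive roots is a bijection onto primitive $mm'$-th roots: injectivity comes from Bezout, $am'+bm=1$, which recovers $\zeta = (\zeta\zeta')^{am'}$ and $\zeta' = (\zeta\zeta')^{bm}$; surjectivity follows from the matching cardinalities $\varphi(m)\varphi(m') = \varphi(mm')$. Hence any $P \in \F[z]$ of degree $<\varphi(mm')$ whose image is $0$ in $\mathbbm{h}$ must vanish at every primitive $mm'$-th root of unity, so $\Phi_{mm'}\mid P$ and thus $P=0$. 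Inverting gives the claimed $\gamma$.

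For computing $\Phi_{mm'}$, observe that its roots are precisely the products of a root of $\Phi_m$ with a root of $\Phi_{m'}$, so $\Phi_{mm'}$ is the \emph{composed product} of $\Phi_m$ and $\Phi_{m'}$. Softly-linear composed-product algorithms (via Newton's identities and relaxed power-series arithmetic, in the spirit of Bostan--Flajolet--Salvy--Schost) compute it in $\tilde{O}(\deg \Phi_m \cdot \deg \Phi_{m'}) = \tilde{O}(\varphi(mm'))$ operations in $\F$. For $\gamma^{-1}$, given $g(z) = \sum_{k<\varphi(mm')} g_k z^k$, form $g(xx') = \sum_k g_k\, x^{k\bmod m}(x')^{k\bmod m'}\in\mathbbm{h}$ by placing the $g_k$'s at positions $(k\bmod m,\, k\bmod m')$ of an $m\times m'$ grid and reducing modulo $\Phi_m(x)$ and $\Phi_{m'}(x')$. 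For $\gamma$, pick $a,b$ with $am'+bm=1$; then $\gamma(x)=z^{am'}$ and $\gamma(x')=z^{bm}$ in $\F[z]/\langle\Phi_{mm'}\rangle$, and the map $(i,j)\mapsto (am'i+bmj)\bmod mm'$ is a CRT bijection $[0,m)\times[0,m')\to[0,mm')$, so $\gamma$ is a permutation of coefficients followed by reduction modulo $\Phi_{mm'}$.

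The principal obstacle is shaving each conversion from the naive $\tilde{O}(mm')$ bound down to the claimed $\tilde{O}(\varphi(mm'))$. The key is to keep the intermediate bivariate (resp.\ univariate) polynomial in sparse form, holding at most $\varphi(mm')$ nonzero entries placed by the CRT permutation, and to carry out the final reduction by invoking fast algorithms specialized to reduction modulo a single cyclotomic factor of $x^m-1$, $(x')^{m'}-1$, or $z^{mm'}-1$, rather than expanding to the full dense representation of size $mm'$.
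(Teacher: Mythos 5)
Your argument for the isomorphism itself is sound, and is a mild variant of the paper's: you extend scalars to $\overline{\F}$ and count over pairs of primitive roots of unity, whereas the paper extends scalars to $\Q$, where $\Phi_{mm'}$ is irreducible and is therefore the minimal polynomial of $xx'$, making $xx'$ a primitive element of the field $\Q[x,x']/\langle\Phi_m,\Phi_{m'}\rangle$. Both are fine. The Bezout-based formulas $\gamma(x)=z^{am'}$, $\gamma(x')=z^{bm}$ and the use of a composed-product algorithm for $\Phi_{mm'}$ also match the paper.

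The gap is in your final paragraph. You identify as the ``principal obstacle'' the task of lowering the cost from the naive $\tilde{O}(mm')$ (for the dense reduction step) to $\tilde{O}(\varphi(mm'))$, and you propose a sparse-representation workaround. This obstacle does not exist: by the classical bound $s \in O(\varphi(s)\log\log s)$ (Bach--Shallit, Theorem 8.8.7, cited in the paper), one has $mm' \in \tilde{O}(\varphi(mm'))$, so the soft-O already absorbs the discrepancy and the plain dense reduction modulo $\Phi_{mm'}$ (and modulo $\Phi_m$, $\Phi_{m'}$) is good enough. Your proposed alternative --- carrying out reduction modulo a single cyclotomic factor directly from a sparse input in time genuinely below $\tilde{O}(mm')$ --- is not a standard primitive and you give no algorithm or cost analysis for it; as written, the complexity claim of the lemma is therefore not established by your argument, even though the missing ingredient is just the elementary fact $mm'=\tilde{O}(\varphi(mm'))$.
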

\begin{proof}
  Without loss of generality, we prove the first claim over $\Q$; the
  result over $\F$ follows by scalar extension. In the field \sloppy
  $\Q[x,x']/\langle \Phi_{m}(x), \Phi_{m'}(x')\rangle$, $xx'$ is
  cancelled by $\Phi_{mm'}$. Since this polynomial is irreducible, it
  is the minimal polynomial of $xx'$, which is thus a primitive
  element for $\Q[x,x']/\langle \Phi_{m}(x),
  \Phi_{m'}(x')\rangle$. This proves the first claim.

  For the second claim, we first determine the images of $x$ and $x'$
  by $\gamma$. Start from a B\'ezout relation $am+ a'm'=1$, for some
  $a,a'$ in $\Z$.  Since $x^m = {x'}^{m'}=1$ in $\mathbbm{h}$, we
  deduce that $\gamma(x)=z^{u}$ and $\gamma(x') = z^{v}$, with $u:=am
  \bmod mm'$ and $v:=a'm' \bmod mm'$. To compute $\gamma(P)$, for some
  $P$ in $\mathbbm{h}$, we first compute $P(z^u, z^v)$, keeping all
  exponents reduced modulo $mm'$. This requires no arithmetic
  operations and results in a polynomial $\bar P$ of degree less than
  $mm'$, which we eventually reduce modulo $\Phi_{mm'}$ (the latter is
  obtained by the composed product algorithm of~\cite{BoFlSaSc06} in
  quasi-linear time).  By~\citep[Theorem~8.8.7]{BacSha96}, we have the
  bound $s \in O(\varphi(s) \log(\log(s)))$, so that $s$ is in
  $\tilde{O}(\varphi(s))$. Thus, we can reduce $\bar P$ modulo
  $\Phi_{mm'}$ in $\tilde{O}(\varphi(mm'))$ operations, establishing
  the cost bound for $\gamma$.

  Conversely, given $Q$ in $\F[z]/\langle\Phi_{mm'}(z)\rangle$, we obtain
  its preimage by replacing powers of $z$ by powers of $xx'$, reducing all
  exponents in $x$ modulo $m$, and all exponents in $x'$ modulo $m'$.  We
  then reduce the result modulo both $\Phi_m(x)$ and $\Phi_{m'}(x')$.  By
  the same argument as above, the cost is softly linear in $\varphi(mm')$.
\end{proof}

\noindent{\bf Extension to several cyclotomic rings.}  The natural
generalization of the algorithm above starts with pairwise distinct
primes $\boldsymbol{p}=(p_1,\dots,p_t)$, non-negative exponent
$\boldsymbol{c}=(c_1,\dots,c_t)$ and variables
$\boldsymbol{x}=(x_1,\dots,x_t)$ over $\F$. Now, we define
$$\H:=\F[x_1,\dots,x_t]/\langle
\Phi_{{p_1}^{c_1}}(x_1),\dots,\Phi_{{p_t}^{c_t}}(x_t)\rangle;$$ when
needed, we will write $\H$ as
$\H_{\boldsymbol{p},\boldsymbol{c},\boldsymbol{x}}$. Finally, we let
$\mu:={p_1}^{c_1}\cdots {p_t}^{c_t}$; then, the dimension $\dim(\H)$ is
$\varphi(\mu)$.

\begin{lemma}\label{lemma:distinctP}
 There exists an $\F$-algebra isomorphism $\Gamma: \H \to
 \F[z]/\langle\Phi_{\mu}(z)\rangle$ given by $x_1 \cdots x_t \mapsto
 z$.  One can apply $\Gamma$ and its inverse to any input using
 $\tilde{O}(\dim(\H))$ operations in $\F$.
\end{lemma}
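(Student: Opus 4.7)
The plan is to iterate the pairwise isomorphism of the previous lemma. Since the $p_i$'s are distinct primes, the orders $m_i := p_i^{c_i}$ are pairwise coprime, so $\gcd(m_1\cdots m_{i-1},m_i)=1$ for every $i\ge 2$. Writing $\mu_i := m_1\cdots m_i$, I would build $\Gamma$ by induction on $i$: having realized an $\F$-algebra isomorphism from $\H$ onto $\F[y_{i-1},x_i,\ldots,x_t]/\langle \Phi_{\mu_{i-1}}(y_{i-1}),\Phi_{m_i}(x_i),\ldots,\Phi_{m_t}(x_t)\rangle$ under which $x_1\cdots x_{i-1}$ corresponds to $y_{i-1}$, I apply the previous lemma to the pair $(y_{i-1},x_i)$ to merge them into a fresh variable $y_i$ with $y_i \leftrightarrow y_{i-1}x_i$ and minimal polynomial $\Phi_{\mu_i}$. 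After $t-1$ merges, setting $z := y_t$ yields the desired map sending $x_1\cdots x_t$ to $z$.

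For the algorithmic content, each merging step, seen on a full element of the current algebra, acts on polynomials in $y_{i-1}$ and $x_i$ whose coefficients lie in the untouched tensor factor $\F[x_{i+1},\ldots,x_t]/\langle \Phi_{m_{i+1}},\ldots,\Phi_{m_t}\rangle$. The pairwise map itself decomposes as a monomial relabeling $y_{i-1}^a x_i^b \mapsto y_i^{(au+bv)\bmod \mu_i}$, where $u,v$ come from a single B\'ezout relation between $\mu_{i-1}$ and $m_i$, followed by a reduction modulo $\Phi_{\mu_i}(y_i)$ performed coefficient-wise in the remaining variables. Precomputing $\Phi_{\mu_i}$ by the composed-product algorithm of~\citep{BoFlSaSc06} costs $\tilde O(\varphi(\mu_i))$, the relabeling is a linear-time rearrangement of the $\dim(\H)=\varphi(\mu)$ coefficients, and the batch of reductions costs $\tilde O(\varphi(\mu_i))\cdot\prod_{j>i}\varphi(m_j) = \tilde O(\varphi(\mu))$. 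Thus each merging step runs in $\tilde O(\dim(\H))$.

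Because the $p_i$ are distinct primes, $2^t\le \mu$; together with the bound $\mu\in\tilde O(\varphi(\mu))$ from~\citep[Theorem~8.8.7]{BacSha96} already invoked in the previous lemma, this gives $t = O(\log \dim(\H))$. Summing $t-1$ merging steps each of cost $\tilde O(\dim(\H))$ yields the stated total cost for applying $\Gamma$; the inverse $\Gamma^{-1}$ is handled symmetrically by running the inverse pairwise isomorphisms in reverse order, under the same analysis.

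The main obstacle I foresee is not the existence of $\Gamma$ but the cost accounting: one has to verify that each merging step, when carried out coefficient-wise across the untouched tensor factors, stays quasi-linear in $\dim(\H)$, and that intermediate polynomials in $y_i$ never exceed degree $\mu_i - 1$ before cyclotomic reduction. The structural point that makes this work is that the reduction only ever touches the single merged variable, so the other tensor directions are simply carried along; this is precisely what lets the $t = O(\log\dim(\H))$ factor arising from iteration be absorbed into the soft-$O$.
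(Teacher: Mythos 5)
Your proof is correct and takes essentially the same route as the paper: iteratively merge two variables at a time using the pairwise coprime-order isomorphism of the previous lemma applied coordinate-wise across the untouched tensor factors, bound each merge by $\tilde O(\dim(\H))$, and absorb the $t = O(\log\dim(\H))$ iterations into the soft-$O$. Your write-up spells out the Bézout relabeling and the coefficient-wise cyclotomic reduction in slightly more detail than the paper, but the structure and cost accounting are the same.
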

\begin{proof}
  We proceed iteratively. First, note that the cyclotomic polynomials
  $\Phi_{{p_i}^{c_i}}$ can all be computed in time $O(\varphi(\mu))$. 
  The isomorphism
  $\gamma: \F[x_1,x_2]/\langle \Phi_{{p_1}^{c_1}}(x_1),
  \Phi_{{p_2}^{c_2}}(x_2)\rangle \to \F[z]/\langle
  \Phi_{{p_1}^{c_1}{p_2}^{c_2}}(z)\rangle$
given in the previous paragraph extends coordinate-wise to an
  isomorphism
  $$\Gamma_1: \H \to \F[z,x_3,\dots,x_t]/\langle
  \Phi_{{p_1}^{c_1}{p_2}^{c_2}}(z),\Phi_{{p_3}^{c_3}}(x_3),\dots,\Phi_{{p_t}^{c_t}}(x_t)\rangle.$$
  By the previous lemma, $\Gamma_1$ and its inverse can be applied to
  any input in time $\tilde{O}(\varphi(\mu))$. Iterate this process
  another $t-2$ times, to obtain $\Gamma$ as a product
  $\Gamma_{t-1} \circ \cdots \circ \Gamma_1$. Since $t$ is logarithmic 
  in $\varphi(\mu)$, the proof is complete.
\end{proof}

\noindent{\bf Tensor product of two prime-power cyclotomic rings, same $p$.}~In the following two paragraphs, we discuss the opposite situation as
above: we now work with cyclotomic polynomials of prime power
orders for a common prime $p$. As above, we start with two such polynomials.

Let thus $p$ be a prime. The key to the following algorithms is the
lemma below.  Let $c,c'$ be positive integers, with $c \ge
c'$, and let $x,y$ be indeterminates over $\F$. Define
\begin{align}
\mathbbm{a}&:=\F[x]/\Phi_{p^c}(x),  \\
\mathbbm{b}&:=\F[x,y]/\langle \Phi_{p^c}(x), \Phi_{p^{c'}}(y)\rangle = \mathbbm{a}[y]/\Phi_{p^{c'}}(y).
\end{align}
Note that $\mathbbm{a}$ and $\mathbbm{b}$ have respective dimensions
$\varphi(p^c)$ and $\varphi(p^c) \varphi(p^{c'})$.
\begin{lemma}
  There is an $\F$-algebra isomorphism $\theta: \mathbbm{b} \to
  \mathbbm{a}^{\varphi(p^{c'})}$ such that one can apply $\theta$ or
  its inverse to any inputs using $\tilde{O}(\dim(\mathbbm{b}))$ operations in $\F$.
\end{lemma}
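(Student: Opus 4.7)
The plan is to realize $\theta$ as multipoint evaluation of elements of $\mathbbm{a}[y]/\Phi_{p^{c'}}(y)$ at the $\varphi(p^{c'})$ primitive $p^{c'}$-th roots of unity already sitting inside $\mathbbm{a}$. Let $\xi := x \bmod \Phi_{p^c}$ be the image of $x$ in $\mathbbm{a}$; since $\xi^{p^c}=1$, each element $\zeta_k := \xi^{k p^{c-c'}}$, for $1\le k\le p^{c'}$ with $\gcd(k,p)=1$, is a primitive $p^{c'}$-th root of unity in $\mathbbm{a}$, and in particular $\Phi_{p^{c'}}(\zeta_k)=0$.

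The key technical point is to verify that the $\varphi(p^{c'})$ nodes $\zeta_k$ have pairwise invertible differences. For distinct indices $k,k'$, factor
\begin{equation*}
\zeta_k-\zeta_{k'} = \xi^{k'p^{c-c'}}\bigl(\xi^{(k-k')p^{c-c'}}-1\bigr);
\end{equation*}
the first factor is a unit as a power of $\xi$. Since $\F$ has characteristic zero, $\Phi_{p^c}$ is separable over $\F$, so $\mathbbm{a}$ decomposes as a finite product of fields in each of which the image of $\xi$ is a primitive $p^c$-th root of unity. Consequently $\xi^{(k-k')p^{c-c'}}$ is a nontrivial root of unity in each field factor, so $\xi^{(k-k')p^{c-c'}}-1$ is nonzero, hence invertible, in each factor, and therefore a unit of $\mathbbm{a}$. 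Counting degrees inside each field factor then yields the identity $\Phi_{p^{c'}}(y)=\prod_k (y-\zeta_k)$ in $\mathbbm{a}[y]$, and the Chinese Remainder Theorem supplies the required isomorphism
\begin{equation*}
\theta:\ \mathbbm{b} \;\cong\; \prod_{k} \mathbbm{a}[y]/(y-\zeta_k) \;\cong\; \mathbbm{a}^{\varphi(p^{c'})}, \qquad f(y)\longmapsto \bigl(f(\zeta_k)\bigr)_k.
\end{equation*}

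For the complexity, applying $\theta$ amounts to a multipoint evaluation of an element of $\mathbbm{a}[y]$ of degree less than $\varphi(p^{c'})$ at the $\varphi(p^{c'})$ nodes $\zeta_k$, while $\theta^{-1}$ is interpolation at the same nodes. Because the pairwise differences of the nodes are units in $\mathbbm{a}$, the standard subproduct-tree based algorithms for fast evaluation and interpolation apply verbatim over $\mathbbm{a}$ and cost $\tilde{O}(\varphi(p^{c'}))$ operations in $\mathbbm{a}$; each such operation costs $\tilde{O}(\varphi(p^c))$ operations in $\F$ via fast univariate arithmetic modulo $\Phi_{p^c}$. The total is $\tilde{O}(\varphi(p^c)\varphi(p^{c'}))=\tilde{O}(\dim\mathbbm{b})$ operations in $\F$, as required.

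The main obstacle I anticipate is the clean justification that the pairwise differences $\zeta_k-\zeta_{k'}$ are units in the (possibly non-field) ring $\mathbbm{a}$: this is what forces the factor-by-factor argument through the decomposition of $\mathbbm{a}$ into fields, and is what makes the CRT splitting and the subproduct-tree algorithms legitimate. Once this node-separation is in place, the CRT decomposition and the standard cost bounds for subproduct-tree based multipoint evaluation and interpolation carry over with only routine bookkeeping.
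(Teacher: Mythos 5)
Your proof follows essentially the same route as the paper: factor $\Phi_{p^{c'}}(y)$ over $\mathbbm{a}$ at the nodes $\rho_i=\xi^{i p^{c-c'}}$, apply the Chinese Remainder Theorem, and use subproduct-tree evaluation/interpolation for the stated cost. The only difference is that you spell out the pairwise-unit-difference check that makes the CRT splitting and the fast evaluation/interpolation valid over the non-field ring $\mathbbm{a}$, a point the paper asserts without elaboration; your justification of it is correct.
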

\begin{proof}
  Let $\xbar$ be the residue class of
  $x$ in $\A$. Then, in $\mathbbm{a}[y]$, $\Phi_{p^{c'}}(y)$ factors as
  $$\Phi_{p^{c'}}(y) =\prod_{\substack{1 \le i\le p^{c'}-1\\ \gcd(i,p)
      =1}} (y-\rho_i),$$ with $\rho_i:={\xbar}^{i p^{c-c'}}$ for all
  $i$.  Even though $\mathbbm{a}$ may not be a field, the Chinese
  Remainder theorem implies that $\mathbbm{b}$ is isomorphic to
  $\mathbbm{a}^{\varphi(p^{c'})}$; the isomorphism is given by
  $$\begin{array}{cccc}
    \theta: & \mathbbm{b} & \to & \mathbbm{a} \times \cdots \times \mathbbm{a}, \\
    & P & \mapsto& (P(\xbar,\rho_1),\dots,P(\xbar,\rho_{\varphi(p^{c'})}).
  \end{array}$$
  In terms of complexity, arithmetic operations $(+,-,\times)$ in
  $\mathbbm{a}$ can all be done in $\tilde{O}(\varphi(p^c))$ operations
  in $\F$. Starting from $\rho_1 \in \mathbbm{a}$, all other roots
  $\rho_i$ can then be computed in $O(\varphi(p^{c'}))$ operations in
  $\mathbbm{a}$, that is, $\tilde{O}(\dim(\mathbbm{b}))$
  operations in $\F$. 
  
Applying $\theta$ and its inverse is done by means of fast evaluation
and interpolation~\citep[Chapter~10]{vzGathen13} in $\tilde{O}(\varphi(p^{c'}))$
operations in $\mathbbm{a}$, that is, $\tilde{O}(\deg(\mathbbm{b}))$ operations in $\F$
(the algorithms do not require that $\mathbbm{a}$ be a field).
\end{proof}

\smallskip\noindent{\bf Extension to several cyclotomic rings.}
Let $p$ be as before, and consider now non-negative integers
$\boldsymbol{c}=(c_1,\dots,c_t)$ and variables $\boldsymbol{x}=(x_1,\dots,x_t)$. We
define the $\F$-algebra
$$\A:=\F[x_1,\dots,x_t]/\langle \Phi_{p^{c_1}}(x_1), \dots,
\Phi_{p^{c_t}}(x_t)\rangle,$$ which we will sometimes write
$\A_{p,\boldsymbol{c},\boldsymbol{x}}$ to make the dependency on $p$
and the $c_i$'s clear. Up to reordering the $c_i$'s, we can assume
that $c_1 \ge c_i$ holds for all $i$, and define as before
$\mathbbm{a}:=\F[x_1]/\Phi_{p^{c_1}}(x_1)$.

\begin{lemma}\label{lemma:A}
  There exists an $\F$-algebra isomorphism $\Theta: \A \to
  \mathbbm{a}^{\dim(\A)/\dim(\mathbbm{a})}$. This isomorphism and its
  inverse can be applied to any inputs using $\tilde{O}(\dim(\A))$
  operations in $\F$.
\end{lemma}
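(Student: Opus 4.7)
The plan is to iterate the isomorphism $\theta$ of the previous lemma, peeling off one variable at a time. Starting from $\A=\mathbbm{a}[x_2,\dots,x_t]/\langle\Phi_{p^{c_2}}(x_2),\dots,\Phi_{p^{c_t}}(x_t)\rangle$, I would first apply the previous lemma to the subalgebra $\mathbbm{a}[x_2]/\Phi_{p^{c_2}}(x_2)$ (which makes sense because $c_1 \ge c_2$ by our assumption). Extended coordinatewise to the remaining variables, this yields an $\F$-algebra isomorphism
$$\Theta_1 : \A \;\longrightarrow\; \bigl(\mathbbm{a}[x_3,\dots,x_t]/\langle \Phi_{p^{c_3}}(x_3),\dots,\Phi_{p^{c_t}}(x_t)\rangle\bigr)^{\varphi(p^{c_2})}.$$
Each of the $\varphi(p^{c_2})$ factors is an algebra of the same shape as $\A$, but with one fewer variable and with $\mathbbm{a}$ still at the maximum exponent. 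Iterating this construction $t-1$ times to eliminate $x_2,x_3,\dots,x_t$ produces $\Theta$ as the composition $\Theta_{t-1}\circ\cdots\circ\Theta_1$, with codomain $\mathbbm{a}^{\varphi(p^{c_2})\cdots\varphi(p^{c_t})} = \mathbbm{a}^{\dim(\A)/\dim(\mathbbm{a})}$, as required.

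For the cost, at each stage the previous lemma is applied in parallel to each currently existing copy of an algebra of the form $\mathbbm{a}[x_i,\dots,x_t]/\langle\Phi_{p^{c_i}}(x_i),\dots\rangle$; since that lemma runs in softly linear time in its input dimension, the total work done at stage $i$ is $\tilde{O}(\dim(\A))$. There are $t-1$ stages, and since $\dim(\A)=\prod_{j}\varphi(p^{c_j})$ with each factor either contributing at least a factor of two to the dimension or being trivial (the single degenerate case $p=2,\,c_j=1$ gives $\Phi_2(x_j)=x_j+1$, a factor one can drop without changing the problem), we may assume $t = O(\log\dim(\A))$. Hence the overall running time for applying $\Theta$ is $\tilde{O}(\dim(\A))$ operations in $\F$; applying $\Theta^{-1}$ has the same cost since each $\theta^{-1}$ of the previous lemma is softly linear.

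The main obstacle is a bookkeeping one rather than a mathematical one: one must verify that at every peeling step the hypothesis of the previous lemma (the maximum exponent on the ``base'' variable) is maintained, and that no $\F$-operations are lost in the coordinatewise extension of $\theta$. Both points follow from the choice to keep $x_1$ fixed as the base throughout and from the fact that tensoring the isomorphism $\theta$ with the identity of the remaining polynomial quotient neither changes its form nor its arithmetic cost per component. The dimension accounting $\dim(\A)/\dim(\mathbbm{a})=\prod_{j\ge 2}\varphi(p^{c_j})$ then matches exactly the number of copies produced by the $t-1$ successive splittings.
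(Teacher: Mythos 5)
Your proof follows the paper's argument essentially verbatim: rewrite $\A$ over $\mathbbm{a}$, peel off one variable at a time via the coordinatewise extension of $\theta$, compose the $t-1$ resulting isomorphisms, and bound the total cost by $t \cdot \tilde{O}(\dim(\A))$ with $t = O(\log \dim(\A))$. The one small improvement over the paper's proof is that you explicitly note the degenerate case $p=2$, $c_j=1$ (where $\varphi(2)=1$) and discard such trivial factors, whereas the paper only discards $c_j=0$; without your observation the claim $t = O(\log\dim(\A))$ does not quite follow when $p=2$.
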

\begin{proof}
Without loss of generality, we can assume that all $c_i$'s are non-zero
(since for $c_i=0$, $\Phi_{p^{c_i}}(x_i)=x_i-1$,
so $\F[x_i]/\langle \Phi_{p^{c_i}}(x_i) \rangle = \F$).
We proceed iteratively. First, rewrite $\A$ as
$$\A=\mathbbm{a}[x_2,x_3,\dots,x_t]/\langle \Phi_{p^{c_2}}(x_2), \Phi_{p^{c_3}}(x_3), \dots,
\Phi_{{p_t}^{c_t}}(x_t)\rangle.$$ 
The isomorphism 
$\theta: \mathbbm{a}[x_2]/\Phi_{p^{c_2}}(x_2) \to \mathbbm{a}^{\varphi(p^{c_2})}$
introduced in the previous paragraph extends coordinate-wise
to an isomorphism 
$$\Theta_1: \A \to (\mathbbm{a}[x_3,\dots,x_t]/\langle
\Phi_{p^{c_3}}(x_3), \dots,
\Phi_{p^{c_t}}(x_t)\rangle)^{\varphi(p^{c_2})};$$ $\Theta_1$ and its
inverse can be evaluated in quasi-linear time $\tilde{O}(\dim(\A))$.
We now work in all copies of $\mathbbm{a}[x_3,\dots,x_t]/\langle
\Phi_{p^{c_3}}(x_3), \dots, \Phi_{p^{c_t}}(x_t)\rangle$ independently,
and apply the procedure above to each of them. Altogether we have
$t-1$ such steps to perform, giving us an isomorphism
$$\Theta = \Theta_{t-1} \circ \cdots \circ \Theta_1:
\A \to
\mathbbm{a}^{\varphi(p^{c_2}) \cdots \varphi(p^{c_t})}.$$
The exponent can be rewritten as $ \dim(\A)/\dim(\mathbbm{a})$, as claimed.
In terms of complexity, all $\Theta_i$'s and their inverses can be computed
in quasi-linear time $\tilde{O}(\dim(\A))$, and we do $t-1$ of them,
where $t$ is $O(\log(\dim(\A)))$. 
\end{proof}

\noindent{\bf Decomposing certain $p$-group algebras.}  The prime $p$
and indeterminates $\boldsymbol{x}=(x_1,\dots,x_t)$ are as before; we now consider
positive integers $\boldsymbol{b}=(b_1,\dots,b_t)$, and the $\F$-algebra
\[
\begin{array}{ccl}
\B&:=&\F[x_1,\dots,x_t]/\langle x_1^{p^{b_1}}-1,\dots,x_t^{p^{b_t}}-1\rangle\\$$
&=& \F[x_1]/\langle x_1^{p^{b_1}}-1 \rangle \otimes \cdots \otimes \F[x_t]/\langle x_t^{p^{b_t}}-1 \rangle.
\end{array}
\]
If needed, we will write $\B_{p,\boldsymbol{b},\boldsymbol{x}}$ to make the dependency
on $p$ and the $b_i$'s clear. This is the $\F$-group algebra
of $\Z/p^{b_1}\Z \times \cdots \times \Z/p^{b_t}\Z$.

\begin{lemma}\label{lemma:alg}
  There exists a positive integer $N$, non-negative integers
  $\boldsymbol{c}=(c_1,\dots,c_N)$ and  an
  $\F$-algebra isomorphism 
  $$\Lambda: \B \to \D= \F[z]/\langle \Phi_{p^{c_1}}(z) \rangle \times \cdots \times \F[z]/\langle \Phi_{p^{c_N}}(z)\rangle.$$
  One can apply the isomorphism and its inverse to any 
  input using $\tilde{O}(\dim(\B))$ operations in $\F$.
\end{lemma}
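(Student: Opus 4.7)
The plan is to build $\Lambda$ in two stages: first apply Chinese remaindering along each variable separately, reducing $\B$ to a product of algebras of the form $\A_{p,\boldsymbol{a},\boldsymbol{x}}$ from Lemma~\ref{lemma:A}, and then apply that lemma to each factor to produce the target decomposition into copies of $\F[z]/\langle\Phi_{p^{c}}(z)\rangle$.

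For the first stage I would use the standard factorization $x^{p^b}-1 = \prod_{j=0}^{b} \Phi_{p^j}(x)$ in $\F[x]$. Viewing $\B$ as a polynomial ring in $x_1$ over $R_1 := \F[x_2,\dots,x_t]/\langle x_2^{p^{b_2}}-1,\dots,x_t^{p^{b_t}}-1\rangle$, Chinese remaindering provides an isomorphism
$$\B \;\longrightarrow\; \prod_{j_1=0}^{b_1} R_1[x_1]/\langle \Phi_{p^{j_1}}(x_1)\rangle$$
computable in $\tilde O(\dim(\B))$ operations in $\F$: subproduct-tree based evaluation and interpolation use $\tilde O(p^{b_1})$ operations in $R_1$, and arithmetic in $R_1$ is quasi-linear in $\dim(R_1)$ since multiplication modulo $x_i^{p^{b_i}}-1$ is just truncated convolution. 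Iterating this construction along $x_2,\dots,x_t$ inside each resulting factor yields an isomorphism
$$\Lambda_0 \colon \B \;\longrightarrow\; \prod_{(j_1,\dots,j_t)} \A_{p,(j_1,\dots,j_t),\boldsymbol{x}},$$
the index tuple ranging over $\{0,\dots,b_1\}\times\cdots\times\{0,\dots,b_t\}$. Since the total dimension is preserved at every step and $t = O(\log(\dim(\B)))$, the cost of $\Lambda_0$ and its inverse stays in $\tilde O(\dim(\B))$.

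For the second stage I would apply Lemma~\ref{lemma:A} independently to each factor $\A_{p,(j_1,\dots,j_t),\boldsymbol{x}}$, obtaining an isomorphism onto $\mathbbm{a}^{\dim(\A)/\dim(\mathbbm{a})}$ with $\mathbbm{a} = \F[z]/\langle \Phi_{p^{c}}(z)\rangle$ for $c=\max(j_1,\dots,j_t)$. Composing all of these with $\Lambda_0$ produces $\Lambda$, and its inverse is obtained by reversing each step (the maps of Lemma~\ref{lemma:A} and of CRT both invert in the same quasi-linear cost). Because the factors $\A_{p,(j_1,\dots,j_t),\boldsymbol{x}}$ partition $\B$, the sum of their dimensions equals $\dim(\B)$, so the second stage also contributes $\tilde O(\dim(\B))$.

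The main subtlety I anticipate is bookkeeping on the cost: the number $N$ of output factors can be as large as $\prod_i(b_i+1)$, which is not directly bounded by $\dim(\B)$. What rescues the estimate is the elementary observation that an algorithm whose cost on a ring $R$ is $\tilde O(\dim(R))$ automatically has total cost $\tilde O(\sum_k \dim(R_k)) = \tilde O(\dim(\B))$ when applied independently to every factor of a product $\prod_k R_k$. Once this is in place, combining the costs of the two stages and invoking $t=O(\log\dim(\B))$ to absorb the number of CRT rounds into the soft-$O$ gives the claimed $\tilde O(\dim(\B))$ bound.
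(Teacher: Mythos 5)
Your proposal is correct and follows essentially the same route as the paper: factor each $x_i^{p^{b_i}}-1$ into cyclotomics, use CRT (variable by variable) to land in a product of the algebras $\A_{p,\boldsymbol{c},\boldsymbol{x}}$, then apply Lemma~\ref{lemma:A} to each factor. The paper phrases the first stage via tensor-product distributivity rather than your explicit iteration over $x_1,\dots,x_t$, but the resulting isomorphism and cost analysis are the same.
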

\begin{proof}
For $i \le t$, we have the factorization
$$x_i^{p^{b_i}}-1 = \Phi_1(x_i) \Phi_p(x_i) \Phi_{p^2}(x_i) \cdots
\Phi_{p^{b_i}}(x_i);$$ note that $\Phi_1(x_i)=x_i-1$.  The factors may
not be irreducible, but they are pairwise coprime, so that we have a
Chinese Remainder isomorphism
\[
  \lambda_i: \F[x_i]/\langle x_i^{p^{b_i}}-1 \rangle \to \F[x_i]/\langle \Phi_1(x_i)\rangle
  \times \cdots \times  \F[x_i]/\langle \Phi_{p^{b_i}}(x_i)\rangle.
\]
Together with its inverse, this can be computed  
in $\tilde{O}(p^{b_i})$ operations in $\F$~\citep[Chapter~10]{vzGathen13}. By distributivity of the tensor
product over direct products, 
this gives an $\F$-algebra isomorphism
$$\lambda: \B \to \prod_{c_1=0}^{b_1} \cdots \prod_{c_t=0}^{b_t} \A_{p,\boldsymbol{c},\boldsymbol{x}},$$
with $\boldsymbol{c}=(c_1,\dots,c_t)$. Together with its inverse, 
$\lambda$ can be computed in $\tilde{O}(\dim(\B))$ operations in $\F$.
Composing with the result in Lemma~\ref{lemma:A}, this gives
us an isomorphism
$$\Lambda: \B \to \D:=\prod_{c_1=0}^{b_1} \cdots \prod_{c_t=0}^{b_t}
\mathbbm{a}_{\boldsymbol{c}}^{D_{\boldsymbol{c}}},$$ where
$\mathbbm{a}_{\boldsymbol{c}} = \F[z]/\langle \Phi_{p^c}(z)\rangle$,
with $c =\max(c_1,\dots,c_t)$ and $D_{\boldsymbol{c}} =
\dim(\A_{t,\boldsymbol{c},\boldsymbol{x}})/\dim(\mathbbm{a}_{\boldsymbol{c}})$. As
before, $\Lambda$ and its inverse can be computed in quasi-linear time
$\tilde{O}(\dim(\B))$.
\end{proof}
As for $\B$, we will write $\D_{p,\boldsymbol{b},\boldsymbol{x}}$ if needed; it is
well-defined, up to the order of the factors.

\smallskip

\noindent{\bf Main result.} Let $G$ be an abelian group.  We can write
the elementary divisor decomposition of $G$ as $G = G_1 \times \cdots
\times G_s$, where each $G_i$ is of prime power order $p_i^{a_i}$, for
pairwise distinct primes $p_1,\dots,p_s$, so that $n = |G|$ writes $n
= p_1^{a_1} \cdots p_s^{a_s}$. Each $G_i$ can itself be written as a
product of cyclic groups, $G_i = G_{i,1} \times \cdots \times
G_{i,t_i}$, where the factor $G_{i,j}$ is cyclic of order
${p_i}^{b_{i,j}}$, with $b_{i,1} \le \cdots \le b_{i,t_i}$; this is
the invariant factor decomposition of $G_i$, with $b_{i,1} + \cdots +
b_{i,t_i} = a_i$.

We henceforth assume that generators
$\gamma_{1,1},\dots,\gamma_{s,t_s}$ of respectively
$G_{1,1},\dots,G_{s,t_s}$ are known, and that elements of $\F[G]$ are
given on the power basis in $\gamma_{1,1},\dots,\gamma_{s,t_s}$. Were
this not the case, given arbitrary generators $g_1,\dots,g_r$ of $G$, with
orders $e_1,\dots,e_r$, a brute-force solution would factor each $e_i$
(factoring $e_i$ takes $o(e_i)$ bit operations on a standard RAM), so
as to write $\langle g_i \rangle$ as a product of cyclic groups of
prime power orders, from which the required decomposition follows.

\begin{proposition}
  Given $\beta \in \F[G]$, written on the power basis
  $\gamma_{1,1},\dots,\gamma_{s,t_s}$, one can test if $\beta$ is a
  unit in $\F[G]$ using $\tilde{O}(n)$ operations in $\F$.
  If it is the case, given $\eta$ in $\F[G]$, one can compute
  $\beta^{-1} \eta$ in the same asymptotic runtime.
\end{proposition}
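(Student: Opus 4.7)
The plan is to reduce invertibility testing and division in $\F[G]$ to the corresponding problems in a direct product of univariate rings of the form $\F[z]/\langle \Phi_m(z)\rangle$, where classical fast Euclidean-type algorithms achieve softly-linear cost.

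First, using the generators $\gamma_{i,j}$, I would identify $\F[G]$ with
$$\bigotimes_{i=1}^s \F[G_i] = \bigotimes_{i=1}^s \B_{p_i, \boldsymbol{b}_i, \boldsymbol{x}_i},$$
where $\boldsymbol{b}_i = (b_{i,1},\dots,b_{i,t_i})$ and $\boldsymbol{x}_i = (x_{i,1},\dots,x_{i,t_i})$, and $\B_{p_i,\boldsymbol{b}_i,\boldsymbol{x}_i}$ is the algebra studied in Lemma~\ref{lemma:alg}. Applying that lemma for each $i$ in turn, extended coordinate-wise over the remaining tensor factors, transforms $\F[G]$ into a tensor product of the algebras $\D_{p_i,\boldsymbol{b}_i,\boldsymbol{x}_i}$. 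Distributing tensor products over the inner direct products then yields an isomorphism of $\F[G]$ onto a direct product of tensor factors of the form
$$\bigotimes_{i=1}^s \F[z_i]/\langle \Phi_{p_i^{c_i}}(z_i)\rangle,$$
indexed by sequences $(c_1,\dots,c_s)$ of non-negative integers.

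Second, since the primes $p_1,\dots,p_s$ are pairwise distinct, for each such sequence the prime powers $p_1^{c_1},\dots,p_s^{c_s}$ are pairwise coprime. Iterating Lemma~\ref{lemma:distinctP} collapses each such tensor factor to a single $\F[z]/\langle \Phi_{M}(z)\rangle$ with $M=p_1^{c_1}\cdots p_s^{c_s}$. Chaining everything together produces an $\F$-algebra isomorphism
$$\Psi: \F[G] \;\longrightarrow\; \prod_{k} \F[z]/\langle \Phi_{M_k}(z)\rangle,$$
together with its inverse, that can be evaluated on any input in $\tilde{O}(n)$ operations: each constituent isomorphism is softly linear in the dimension of the algebra it acts on, the dimensions never exceed $n$ during the process, and only $O(\log n)$ composed layers are needed.

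Third, in each component $\F[z]/\langle \Phi_{M_k}(z)\rangle$ I would test invertibility of the image $\bar\beta_k$ of $\beta$ by computing $\gcd(\bar\beta_k,\Phi_{M_k})$; since $\Phi_{M_k}$ is separable in characteristic zero, $\bar\beta_k$ is a unit if and only if this gcd is a non-zero constant, and this fact does not require us to actually factor $\Phi_{M_k}$ over $\F$. Declaring $\beta$ to be a unit iff every $\bar\beta_k$ is a unit gives the desired test. For division by an invertible $\beta$, the fast extended Euclidean algorithm provides $\bar\beta_k^{-1}$ modulo $\Phi_{M_k}$, a multiplication yields $\bar\beta_k^{-1}\bar\eta_k$, and applying $\Psi^{-1}$ returns $\beta^{-1}\eta$. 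Each per-component step costs $\tilde{O}(\varphi(M_k))$ operations in $\F$, and since $\sum_k \varphi(M_k) = \dim_\F \F[G] = n$, the total cost is $\tilde{O}(n)$. No real obstacle remains beyond careful bookkeeping; the work has all been done in the preceding decomposition lemmas and in classical softly-linear univariate polynomial arithmetic.
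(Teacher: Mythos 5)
Your proposal follows the paper's proof essentially step for step: identify $\F[G]$ with the tensor product of the $\B_{p_i,\boldsymbol{b}_i,\boldsymbol{x}_i}$, apply Lemma~\ref{lemma:alg} coordinate-wise to pass to products of cyclotomic quotients, distribute tensor over product and use Lemma~\ref{lemma:distinctP} across the distinct primes to collapse to univariate rings $\F[z]/\langle\Phi_{M_k}(z)\rangle$, then test invertibility and divide componentwise via fast extended GCD. The only cosmetic difference is your invocation of separability of $\Phi_{M_k}$, which is not actually needed (the gcd criterion for invertibility modulo a polynomial holds in general); otherwise the argument and cost analysis match the paper's.
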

In view of
Lemma~\ref{Lem:Proj-bis}, Proposition~\ref{prop:polycyclic} and the
claim on the cost of invertibility testing prove the first part of
Theorem \ref{thm:main}; the second part of this proposition will allow
us to prove Theorem~\ref{thm:main2} in the next section.

\smallskip

The proof of the proposition occupies the rest of this paragraph.
From the factorization $G = G_1 \times \cdots \times G_s$, we deduce
that the group algebra $\F[G]$ is the tensor product $\F[G_1] \otimes
\cdots \otimes \F[G_s]$. Furthermore, the factorization $G_i = G_{i,1}
\times \cdots \times G_{i,t_i}$ implies that $\F[G_i]$ is isomorphic,
as an $\F$-algebra, to
$$\F[x_{i,1},\dots,x_{i,t_i}]/\left \langle
x_{i,1}^{p_i^{b_{1}}}-1,\dots,x_{i,t_i}^{p_i^{b_{i,t_i}}}-1\right\rangle
=\B_{p_i,\boldsymbol{b}_i,\boldsymbol{x}_i},$$ with $\boldsymbol{b}_i
= (b_{i,1},\dots,b_{i,t_i})$ and $\boldsymbol{x}_i =
(x_{i,1},\dots,x_{i,t_i})$. Given $\beta$ on the power basis in
$\gamma_{1,1},\dots,\gamma_{s,t_s}$, we obtain its image $B$ in
$\B_{p_1,\boldsymbol{b}_1,\boldsymbol{x}_1} \otimes \cdots \otimes
\B_{p_s,\boldsymbol{b}_s,\boldsymbol{x}_s}$ simply by renaming
$\gamma_{i,j}$ as $x_{i,j}$, for all $i,j$.

For $i \le s$, by Lemma~\ref{lemma:alg}, there exist integers
$c_{i,1},\dots,c_{i,N_i}$ such that
$\B_{p_i,\boldsymbol{b}_i,\boldsymbol{x}_i}$ is isomorphic to an
algebra $\D_{p_i, \boldsymbol{b}_i, z_i}$, with factors 
$\F[z_i]/\langle \Phi_{{p_i}^{c_{i,j}}}(z_i) \rangle$.
By distributivity of the tensor product over direct products, we
deduce that $\B_{p_1,\boldsymbol{b}_1,\boldsymbol{x}_1} \otimes \cdots
\otimes \B_{p_s,\boldsymbol{b}_s,\boldsymbol{x}_s}$ is isomorphic to
the product of algebras
 \begin{equation}\label{eq:prod}
\text{\small $\prod$}_{\boldsymbol{j}}~ \F[z_1,\dots,z_s]/
\langle \Phi_{{p_1}^{c_{1,j_1}}}(z_1),\dots, \Phi_{{p_s}^{c_{s,j_s}}}(z_s) \rangle,   
 \end{equation}
for indices $\boldsymbol{j}=(j_1,\dots,j_s)$, with
$j_1 =1,\dots,N_1,\dots,j_s=1,\dots,N_s$;
call $\Gamma$ the isomorphism. Given $B$ in $\B_{p_1,\boldsymbol{b}_1,\boldsymbol{x}_1} \otimes
\cdots \otimes \B_{p_s,\boldsymbol{b}_s,\boldsymbol{x}_s}$,
Lemma~\ref{lemma:alg} also implies that $B':=\Gamma(B)$ can be
computed in softly linear time $\tilde{O}(n)$ (apply the isomorphism
corresponding to $\boldsymbol{x}_1$ coordinate-wise with respect to
all other variables, then deal with $\boldsymbol{x}_2$, etc).
The codomain in~\eqref{eq:prod} is the product of all $\H_{\boldsymbol{p},\boldsymbol{c}_{\boldsymbol{j}},\boldsymbol{z}}$,
with 
$$\boldsymbol{p}=(p_1,\dots,p_s),\quad \boldsymbol{c}=(c_{1,j_1},\dots,c_{s,j_s}),\quad \boldsymbol{z}=(z_1,\dots,z_s).$$
Apply Lemma~\ref{lemma:distinctP} to all 
$\H_{\boldsymbol{p},\boldsymbol{c}_{\boldsymbol{j}},\boldsymbol{z}}$ to obtain
an $\F$-algebra isomorphism
$$\Gamma': \text{\small $\prod$}_{\boldsymbol{j}}~
\H_{\boldsymbol{p},\boldsymbol{c}_{\boldsymbol{j}},\boldsymbol{z}} \to
\text{\small $\prod$}_{\boldsymbol{j}} ~\F[z]/\langle
\Phi_{d_{\boldsymbol{j}}}(z) \rangle,$$ for certain integers
$d_{\boldsymbol{j}}$. The lemma implies that given $B'$,
$B'':=\Gamma'(B')$ can be computed in softly linear time
$\tilde{O}(n)$ as well. Invertibility of $\beta \in \F[G]$ is
equivalent to $B''$ being invertible, that is, to all its components
being invertible in the respective factors $\F[z]/\langle
\Phi_{d_{\boldsymbol{j}}}(z) \rangle$. Invertibility in such an
algebra can be tested in softly linear time by applying the fast
extended GCD algorithm~\citep[Chapter~11]{vzGathen13}, so the first
part of the proposition follows.

Given $\eta$ in $\F[G]$, we can similarly compute its image $H''$ in
$\text{\small $\prod$}_{\boldsymbol{j}} ~\F[z]/\langle
\Phi_{d_{\boldsymbol{j}}}(z) \rangle,$ with the same asymptotic
runtime as for $\beta$. If we suppose $\beta$ (and thus $B''$)
invertible, division in each $\F[z]/\langle
\Phi_{d_{\boldsymbol{j}}}(z) \rangle$ takes softly linear time in the
degree $\phi_{d_{\boldsymbol{j}}}$; as a result, we obtain ${B''}^{-1}
H''$ in time $\tilde{O}(n)$.  One can finally invert all
isomorphisms we applied, in order to recover $\beta^{-1} \eta$ in
$\F[G]$; this also takes time $\tilde{O}(n)$. Summing all costs,
this establishes the second part of the proposition.


\subsection{Metacyclic Groups}

In this subsection, we study the invertibility and division problems
for a metacyclic group $G$. A group $G$ is metacyclic if it has a
normal cyclic subgroup $H$ such that $G/H$ is cyclic: this is the case
$r=2$ in the definition we gave of polycyclic groups. For instance,
any group with a squarefree order is metacyclic, see
\citep[p.~88]{Johnson} or \citep[p.~334]{Curtis} for more
background. 

For such groups, we will use a standard specific notation, rather than
the general one introduced in~\eqref{eq:polycyclicgrp} for arbitrary
polycyclic ones: we will write $(\sigma,\tau)$ instead of $(g_1,g_2)$
and $(m,s)$ instead of $(e_1,e_2)$. Then, a metacyclic group $G$ can
be presented~as
\begin{equation}
  \label{eq:metacyclic}
  \langle \sigma,\tau: \sigma^m = 1,  \tau^s = \sigma^t, \tau^{-1}\sigma \tau = \sigma^u \rangle,
\end{equation}
for integers $m,t,u,s$, with $u,t \leq m$ and $u^s = 1 \bmod t$, $ut =
t \bmod m$. For example, the dihedral group
$$D_{2m} = \langle \sigma,\tau: \sigma^m =1, \tau^2 = 1, \tau^{-1}
\sigma \tau = \sigma^{m-1} \rangle, $$ is metacyclic, with
$s=2$. Generalized quaternion groups, which can be presented as
$$Q_m = \langle \sigma,\tau: \sigma^{2m} =1, \tau^2 = \sigma^m,
\tau^{-1} \sigma \tau = \sigma^{2m-1} \rangle,$$ are metacyclic, with
$s=2$ as well. Using the notation of~\eqref{eq:metacyclic}, $n=|G|$ is
equal to $ms$, and all elements in a metacyclic group can be presented
uniquely as either
\begin{equation}\label{pres1}
\{\sigma^i \tau^j,\,\,\, 0\leq i \leq m-1,\ 0\leq j \leq s-1\}  
\end{equation}
or
\begin{equation}\label{pres2}
\{ \tau^j\sigma^i,\,\,\, 0\leq i \leq m-1,\ 0\leq j \leq s-1\}.
\end{equation}
Accordingly, elements in the group algebra $\F[G]$ can be written as 
either 
$$\sum_{\substack{i <m\\ j< s}} c_{i,j} \sigma^i \tau^j
\quad\text{or}\quad \sum_{\substack{i <m\\ j< s}} c'_{i,j} \tau^j
\sigma^i.$$ Conversion between the two representations involves no
operation in $\F$, using the commutation relation $\sigma^k \tau^c =
\tau^c \sigma^{ku^c}$ for $k,c \ge 0$.

To test invertibility in $\F[G]$, a possibility would be to rely on
the Wedderburn decomposition of $\F[G]$, but the structure of group
algebras of metacyclic groups is not straightforward to exploit; see
for instance \citep[\S 47]{Curtis} for algebraically closed $\F$, or,
when $\F=\Q$,~\citep{Decomposition} for dihedral and quaternion
groups. Instead, we will highlight the structure of the multiplication
matrices in $\F[G]$.

Take $\beta$ in $\F[G]$. In eq.~\eqref{eqdef:M}, we introduced the
matrix $\mat M_\F(\beta)$ of left multiplication by $\beta$ in
$\F[G]$, where columns and rows were indexed using an arbitrary ordering
of the group elements. We will now reorder the rows and columns of
$\mat M_\F(\beta)$ using the two presentations of $G$ seen in
\eqref{pres1} and~\eqref{pres2}, in order to highlight its block structure.
In what follows, for non-negative integers $a,b,c$, we will write
$\beta_{a,b,c}$ for the coefficient of $\tau^a \sigma^b \tau^c$ in the
expansion of $\beta$ on the $\F$-basis of $\F[G]$.

We first rewrite $\mat M_\F(\beta)$ by reindexing its columns
by 
$$\begin{bmatrix}
(\sigma^0 \tau^0)^{-1} &  \cdots & (\sigma^{m-1} \tau^0)^{-1} \cdots& (\sigma^0 \tau^{s-1})^{-1} &  \cdots & (\sigma^{m-1} \tau^{s-1})^{-1} 
\end{bmatrix}$$
and its rows by
$$\begin{bmatrix}
\tau^0 \sigma^0 & \cdots & \tau^0 \sigma^{m-1} & \cdots& \tau^{s-1} \sigma^0 & \cdots&  \tau^{s-1} \sigma^{m-1}
\end{bmatrix}.
$$ This matrix displays a $s \times s$
block structure. Each block has itself size $m \times m$; for $1 \le
u,v \le s$ and $1 \le a,b \le m$, the entry of index $(a,b)$ in the
block of index $(u,v)$ is the coefficient of $\tau^u \sigma^a \sigma^b
\tau^v$ in $\beta$, that is, $\beta_{u,a+b,v}$. In other words,
all blocks are Hankel matrices.

Using the algorithm of~\cite{BoJeMoSc17} (see
also~\citealt[Appendix~A]{EbGiGiSVi07}), this structure allows us to
solve a system such as $\mat M_\F(\beta) \boldsymbol x = \boldsymbol
y$ in Las Vegas time $\tilde{O}(s^{\omega-1} n)$ (or raise an error if
there is no solution). In addition, if the right-hand side is zero and
$\mat M_\F(\beta)$ is not invertible, the algorithm returns a non-zero
kernel element.
This last remark allows us to test whether $\beta$ is invertible in
Las Vegas time $\tilde{O}(s^{\omega-1} n)$; if so, given the
coefficient vector $\boldsymbol y$ of some $\eta$ in $\F[G]$, we can
compute $\beta^{-1} \eta$ in the same asymptotic runtime.

It is also possible to reorganize the rows and columns of 
 $\mat M_\F(\beta)$, using indices
$$\begin{bmatrix}
(\tau^0 \sigma^0)^{-1} & \cdots & (\tau^0 \sigma^{m-1})^{-1} & \cdots& (\tau^{s-1} \sigma^0)^{-1} & \cdots&  (\tau^{s-1} \sigma^{m-1})^{-1}
\end{bmatrix}$$
for its columns and 
$$\begin{bmatrix}
\sigma^0 \tau^0 &  \cdots & \sigma^{m-1} \tau^0 \cdots& \sigma^0 \tau^{s-1} &  \cdots & \sigma^{m-1} \tau^{s-1}
\end{bmatrix}
$$ for its rows. The resulting matrix has an $m \times m$ block
structure, where each $s\times s$ block is Hankel. As a result, it
allows us to solve the problems above, this time using
$\tilde{O}(m^{\omega-1} n)$ operations in $\F$.  Since we have either
$s\le \sqrt n$ or $m \le \sqrt n$, this implies the following.

\begin{proposition}
  Given $\beta \in \F[G]$, one can test if $\beta$ is a unit in
  $\F[G]$ using $\tilde{O}(n^{(\omega+1)/2})$ operations in $\F$.  If
  it is the case, given $\eta$ in $\F[G]$, one can compute $\beta^{-1}
  \eta$ in the same asymptotic runtime.
\end{proposition}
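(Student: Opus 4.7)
The plan is to exploit the block-Hankel structure of the multiplication matrix $\mat M_\F(\beta)$ that is exposed in the discussion preceding the statement, and to combine it with the structured linear algebra algorithm of Bostan, Jeannerod, Mouilleron, and Schost which solves such systems in quasi-linear time in the matrix size, up to a factor polynomial in the block size.

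First I would formally verify the two block-Hankel layouts. Using rows indexed by $\tau^u \sigma^a$ and columns indexed by $(\sigma^b \tau^v)^{-1}$, the entry of $\mat M_\F(\beta)$ at position $((u,a),(v,b))$ is the coefficient of $\tau^u \sigma^{a+b} \tau^v$ in $\beta$, which depends on $a+b$ only within each $(u,v)$-block. Hence $\mat M_\F(\beta)$ has an $s \times s$ outer block structure in which every $m \times m$ block is Hankel. Swapping the roles of $\sigma$ and $\tau$ in both indexings, i.e.\ writing rows as $\sigma^a \tau^u$ and columns as $(\tau^v \sigma^b)^{-1}$ and using $\sigma^a \tau^u \tau^v \sigma^b = \sigma^a \tau^{u+v} \sigma^b$ together with the commutation relation $\tau^{u+v}\sigma^b = \sigma^{bu^{u+v}}\tau^{u+v}$, yields by a symmetric argument an $m \times m$ outer block structure with $s \times s$ Hankel blocks.

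Next I would invoke the block-Hankel solver of \cite{BoJeMoSc17} on each layout. The first layout has displacement rank $O(s)$, so the solver either tests invertibility of $\mat M_\F(\beta)$ and solves $\mat M_\F(\beta) \boldsymbol x = \boldsymbol y$ for an arbitrary right-hand side $\boldsymbol y$ (the coefficient vector of $\eta$), or returns a non-trivial kernel element, in Las Vegas time $\tilde{O}(s^{\omega-1} n)$. The dual layout gives the symmetric bound $\tilde{O}(m^{\omega-1} n)$.

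Finally, to obtain the stated bound I would simply run whichever of the two procedures corresponds to the smaller of $m$ and $s$. Since $n=ms$, one always has $\min(m,s)\le \sqrt{n}$, so the running time is at most $\tilde{O}(n^{(\omega-1)/2}\cdot n) = \tilde{O}(n^{(\omega+1)/2})$. Division is handled by the same call, with $\boldsymbol y$ taken to be the coefficient vector of $\eta$ and the output reinterpreted as an element of $\F[G]$. I do not anticipate a genuine obstacle here: the nontrivial content has already been packaged inside the cited structured solver, and the only care needed is to check that the two presentations \eqref{pres1} and \eqref{pres2} are compatible with the displacement operators used by that algorithm, which follows directly from the Hankel block pattern established above.
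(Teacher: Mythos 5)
Your proposal is correct and follows the same route as the paper's proof: reorder rows and columns of $\mat M_\F(\beta)$ according to the two presentations $\{\sigma^i\tau^j\}$ and $\{\tau^j\sigma^i\}$ to expose, respectively, an $s\times s$ block structure with $m\times m$ Hankel blocks and an $m\times m$ block structure with $s\times s$ Hankel blocks, then invoke the structured solver of Bostan--Jeannerod--Mouilleron--Schost on whichever layout has the smaller outer block count, using $\min(m,s)\le\sqrt n$ to get $\tilde{O}(n^{(\omega+1)/2})$.

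One small slip: the commutation relation you state for the second layout, $\tau^{u+v}\sigma^b = \sigma^{bu^{u+v}}\tau^{u+v}$, is backwards; from $\sigma^k\tau^c = \tau^c\sigma^{ku^c}$ one gets $\tau^{c}\sigma^{b} = \sigma^{b u^{-c}}\tau^{c}$ (exponent interpreted modulo $m$). But this is immaterial to the Hankel structure, and it would have been cleaner to argue directly: the entry in the second layout is the coefficient of $\sigma^a\tau^{u+v}\sigma^b$ in $\beta$, and this group element manifestly depends only on $a$, $b$ and $u+v$, so for fixed $(a,b)$ the $s\times s$ block is Hankel without needing to normalize the word at all.
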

Combined with Proposition~\ref{prop:polycyclic}, the former statement
provides the last part of the proof of Theorem \ref{thm:main}.

 \section{Basis Conversion}\label{sec:conversion}

We conclude this paper with algorithms for basis conversion: assuming
we know that $\alpha$ is normal, we show how to perform the
change-of-basis between the power basis of $\K/\F$ and the normal
basis $G\cdot \alpha$. The techniques used below are inspired by 
those used by~\cite[Section~4]{KalSho98} in the case of extensions
of finite fields.

\subsection{From normal to power basis} Suppose $G = \lbrace g_1, \ldots, g_n
\rbrace$, $\alpha$ is a normal element of $\K/\F$ and we are given $u\in \K$ 
as $u=\sum_{i=1}^n u_i g_i(\alpha)$. In order to write
$u $ in the power basis, we have to compute the matrix-vector 
product
\begin{equation}\label{eq:norm2pwrmat}
\left[\begin{array}{ccc}
\boldsymbol{ \gamma_1} & \cdots & \boldsymbol{ \gamma_n }
\end{array}\right]\cdot 
\begin{bmatrix}
u_1 \\ \vdots \\ u_n
\end{bmatrix},
\end{equation}
where for $i=1,\dots,n$, $\boldsymbol \gamma_i \in \F^{n\times 1}$ is
the coefficient vector of $g_i(\alpha)$. As already pointed out by
Kaltofen and Shoup for finite fields, this shows that conversion from normal to power
basis is the transpose problem of computing the ``projected'' orbit
sum $s_{\alpha,\ell}$, which we solved in Section \ref{sec:osum}. 

The transposition principle then allows us to derive runtime estimates
for the conversion problem; below, we present an explicit procedure
derived from the algorithm in Subsection~\ref{ssec:proj_abelian}.  As
in that section, we give the algorithm in the general case of a
polycyclic group $G$ presented as
\[
G =\{ g_r^{i_r} \cdots g_1^{i_1}, \text{~with~}  0 \leq i_j < e_j \text{~for~} 1 \leq j \leq r\}.
\]
With indices $i_1,\dots,i_r$ as above, we are given a family of 
coefficients $u_{i_1,\dots,i_r}$ in $F$, and we expand the sum
$u=\sum_{i_1,\dots,i_r} u_{i_1,\dots,i_r} g_r^{i_r} \cdots g_1^{i_1}(\alpha)$
on the power basis of $\K/\F$. For this, we let $z \in\{1,\dots,r\}$ be the index
defined in Subsection~\ref{ssec:proj_abelian}.

\smallskip\noindent \textbf{Step 1.} Apply Lemma \ref{lem:selfcomp},
with $\alpha_{i_1,\dots,i_r} = \alpha$ for all $i_1,\dots,i_r$, to get
$$ G_{i_z,\dots,i_1}=g_z^{i_z} \cdots
g_1^{i_1}(\alpha),$$ for all indices $i_1,\dots,i_z$ such that $0\leq
i_m < e_m$ holds for $m=1,\dots,z-1$ and $0\leq i_z <s_z=
\lceil{\sqrt{n}}/(e_1 \cdots e_{z-1})\rceil$.  As in
Subsection~\ref{ssec:proj_abelian}, the cost of this step is
$\osumcosttilde$.

\smallskip\noindent\textbf{Step 2.} Compute $G_z=g_z^{s_z}$, for $s_z$
as above. The cost is is negligible compared to the cost of the
previous step.

\smallskip\noindent\textbf{Step 3.} Compute the matrix product 
$\mat U \mat \Gamma$, where
\begin{itemize}
\item[$\bullet$] $\mat U$ is the matrix over $\F$ having $\lceil
  e_z/s_z\rceil e_{z+1} \cdots e_r$ rows and $e_1 \cdots e_{z-1} s_z$
  columns built as follows. Rows are indexed by $(j_z,\dots,j_r)$,
  with $0\le j_z < \lceil e_z/s_z\rceil$ and $0 \le j_m < e_m$ for all
  other indices; columns are indexed by $(i_1,\dots,i_z)$, with $0\le
  i_z < s_z$ and $0 \le i_m < e_m$ for all other indices; the entry at
  rows $(j_z,\dots,j_r)$ and column $(i_1,\dots,i_z)$ is
  $u_{i_1,\dots,i_z + s_z j_z, j_{z+1},\dots,j_r}$.
\item[$\bullet$] $\mat \Gamma$ is the matrix with $e_1 \cdots e_{z-1}
  s_z$ rows (indexed in the same way as the columns of $\mat U$) and $n$
  columns, whose row of index $(i_1,\dots,i_z)$ contains the
  coefficients of $ G_{i_z,\dots,i_1}$ (on the power basis of $\K$)
\end{itemize}
As established in Subsection~\ref{ssec:proj_abelian}, the row and column
dimensions of $\mat U$ are $O(\sqrt n)$, so this product can 
be computed in $O(n^{(1/2)\cdot\omega(2)})$ operations in $\F$. The rows of 
the resulting matrix give the coefficients of 
$$ H_{j_{z+1},\dots,j_r} = \sum_{i_1,\dots,i_z} u_{i_1,\dots,i_z+s_z j_z,\dots,j_r} g_z^{i_z} \cdots
g_1^{i_1}(\alpha),$$
for all indices $(j_z,\dots,j_r)$ and $(i_1,\dots,i_z)$ as above.

\smallskip\noindent\textbf{Step 4.} Compute and add all
$$g_r^{j_1} \cdots g_{z+1}^{j_{z+1}} G_z^{j_z} ( H_{j_{z+1},\dots,j_r}
),$$ for indices $(j_z,\dots,j_r)$ as above; their sum is precisely
the input element $u=\sum_{i_1,\dots,i_r} u_{i_1,\dots,i_r} g_r^{i_r}
\cdots g_1^{i_1}(\alpha)$, written on the power basis.

This is done by a second call to Lemma \ref{lem:selfcomp}, for the same
asymptotic cost as in Step 1. Summing all costs, we arrive
at an overall runtime of $\osumcosttilde$ operations in $\F$ for
the conversion from normal to power basis. This proves the first 
half of Theorem~\ref{thm:main2}.


\subsection{Power basis to normal basis.}

Now assume $u \in \K$ is given in the power basis. Still writing the
elements of $G$ as $g_1,\dots,g_n$, the goal is to find coefficients
$c_i$'s in $\F$ such that
$$ \sum_{i = 1}^{n} c_i g_i(\alpha) = u.$$ Starting from this
equality, for any element $g_j$ of $G$, we have
$$ \sum_{i = 1}^{n} c_i g_jg_i(\alpha) = g_j(u).$$
Then, if $\ell$ is a random $\F$-linear projection $\K\to\F$, we get 
$$\sum_{i = 1}^{n} c_i \ell(g_jg_i(\alpha)) = \ell(g_j(u)), \quad  1 \le j \le n.$$
Introducing 
$$u'= \sum_{i=1}^{n} c_i g_i^{-1} \in \F[G]$$
and writing as before
$$s_{\alpha,\ell}=\sum_{j=1}^{n} \ell(g_j(\alpha))g_j
\quad\text{and}\quad
s_{u,\ell} = \sum_{j=1}^{n}\ell(g_j(u)) g_j \quad\text{in~} \F[G],$$ the
$n$ equations above are equivalent to the equality
$s_{\alpha,\ell}\ u' = s_{u,\ell}$ in $\F[G]$.

We use the algorithm of Section~\ref{sec:osum} to compute both
$s_{\alpha,\ell}$ and $s_{u,\ell}$; this takes $\osumcosttilde$
operations in $\F$, for $G$ polycyclic. If $\alpha$ is normal,
$s_{\alpha,\ell}$ is a unit for a generic $\ell$. Then, if we further
assume that $G$ is either abelian or metacyclic, it suffices to apply
the division algorithms given in the previous section to recover $u'$,
and thus all coefficients $c_1,\dots,c_n$. In both cases, the runtime
of the division is negligible compared to the cost $\osumcosttilde$ of
the first step. Altogether, this finishes the proof of
Theorem~\ref{thm:main2}.

 
\newcommand{\Gathen}{\relax}


\end{document}